\providecommand{\tabularnewline}{\\}
  \theoremstyle{definition}
  \newtheorem{defn}{\protect\definitionname}
  \theoremstyle{plain}
  \newtheorem{assumption}{\protect\assumptionname}
  \theoremstyle{plain}
  \newtheorem{prop}{\protect\propositionname}
  \theoremstyle{remark}
  \newtheorem{rem}{\protect\remarkname}
  \theoremstyle{plain}
  \newtheorem{lem}{\protect\lemmaname}
\date{September 22, 2020}
  \providecommand{\assumptionname}{Assumption}
  \providecommand{\definitionname}{Definition}
  \providecommand{\lemmaname}{Lemma}
  \providecommand{\propositionname}{Proposition}
  \providecommand{\remarkname}{Remark}
\begin{document}

\title{Estimating Dynamic Treatment Effects in Event Studies with Heterogeneous
Treatment Effects\thanks{{\footnotesize{}We are grateful to Isaiah Andrews, Amy Finkelstein,
Anna Mikusheva, and Heidi Williams for their guidance and support.
We thank Alberto Abadie, Jonathan Cohen, Nathan Hendren, Peter Hull,
Guido Imbens, Yunan Ji, Sylvia Klosin, Kevin Kainan Li, Paichen Li,
Therese A. McCarty, Whitney Newey, James Poterba, Pedro H. C. Sant'Anna,
Gergely Ujhelyi and Helen Willis for helpful discussions. This research
was supported by the National Institute on Aging, Grant Number T32-AG000186.
A preliminary draft of this paper was circulated on April 16, 2018.
Replication code is available at \protect\url{http://economics.mit.edu/grad/lsun20/}.
The companion Stata package \texttt{eventstudyweights} is available
from the SSC repository.}}}

\author{Liyang Sun\thanks{Department of Economics, MIT, 77 Massachusetts Avenue, Cambridge,
MA 02139. Corresponding author; lsun20@mit.edu.} \ and Sarah Abraham\thanks{Cornerstone Research, 699 Boylston St, Boston, MA 02116. The views
expressed herein are solely those of the author, who is responsible
for the content, and do not necessarily represent the views of Cornerstone
Research. }}

\maketitle
\thispagestyle{empty}
\begin{abstract}
To estimate the dynamic effects of an absorbing treatment, researchers
often use two-way fixed effects regressions that include leads and
lags of the treatment. We show that in settings with variation in
treatment timing across units, the coefficient on a given lead or
lag can be contaminated by effects from other periods, and apparent
pretrends can arise solely from treatment effects heterogeneity. We
propose an alternative estimator that is free of contamination, and
illustrate the relative shortcomings of two-way fixed effects regressions
with leads and lags through an empirical application.
\end{abstract}
\noindent \textit{Keywords:} difference-in-differences, two-way fixed
effects, pretrend test \newpage{}

\section{Introduction}

Rich panel data has fueled a growing literature estimating treatment
effects with two-way fixed effects regressions. This body of applied
work has prompted a corresponding econometrics literature investigating
the assumptions required for these regressions to yield causally interpretable
estimates. For example, \citet{athey_imbens_staggered_adoption},
\citet{kirill}, \citet{callaway_santanna_ssrn2018}, \citet{double_fe}
and \citet{goodman-bacon_difference--differences_2018} interpret
the coefficient on the treatment status when there is treatment effects
heterogeneity and variation in treatment timing. Researchers are often
also interested in dynamic treatment effects, which they estimate
by the coefficients $\mu_{\ell}$ associated with indicators for being
$\ell$ periods relative to the treatment, in a specification that
resembles the following:
\begin{equation}
Y_{i,t}=\alpha_{i}+\lambda_{t}+\sum_{\ell}\mu_{\ell}\mathbf{1}\{t-E_{i}=\ell\}+\upsilon_{i,t}.\label{eq:TWFE intro}
\end{equation}
Here $Y_{i,t}$ is the outcome of interest for unit $i$ at time $t$,
$E_{i}$ is the time when unit $i$ initially receives the binary
absorbing treatment, and $\alpha_{i}$ and $\lambda_{t}$ are the
unit and time fixed effects. Units are categorized into different
cohorts based on their initial treatment timing. The relative times
$\ell=t-E_{i}$ included in (\ref{eq:TWFE intro}) cover most of the
possible relative periods, but may still exclude some periods. 

The first goal of this paper is to uncover potential pitfalls associated
with using the estimates of the relative period coefficients $\mu_{\ell}$
as ``reasonable'' measures of dynamic treatment effects. We decompose
$\mu_{\ell}$ to show it can be expressed as a linear combination
of cohort-specific effects from both its own relative period $\ell$
and other relative periods; unless strong assumptions regarding treatment
effects homogeneity hold, the terms that include treatment effects
from other relative periods will not cancel out and will contaminate
the estimate of $\mu_{\ell}.$ Importantly, this demonstrates that
the widespread practice of using estimates of treatment leads in (\ref{eq:TWFE intro})
as a way of testing for parallel pretrends is problematic. \citet{roth_pretest_2019},
in his survey of the applied literature, notes that checking whether
$\mu_{\ell}=0$ for $\ell$ leads of treatment is a common test for
pretrends. Our decomposition result implies that such a test would
be invalid because the estimate of $\mu_{\ell}$ is affected by both
pretrends and treatment effects heterogeneity, thus any test of $\mu_{\ell}=0$
cannot accept or reject the existence of pretrends without further
assumptions on treatment effects. 

We show how to calculate the weights underlying the linear combination
of treatment effects in $\mu_{\ell}$ using an auxiliary regression.
This auxiliary regression depends only on the distribution of cohorts
and the relative time indicators included in (\ref{eq:TWFE intro}).
Examining the weights allows researchers to  gauge how large the amount
of treatment effects heterogeneity needs to be for $\mu_{\ell}$ to
be contaminated by treatment effects from other relative periods.
Our publicly-available Stata package {\footnotesize{}\texttt{eventstudyweights}}
automates the estimation of these weights using the panel dataset
underlying any given specification of (\ref{eq:TWFE intro}).

The second goal of this paper is to propose an alternative regression-based
method that is more robust to treatment effects heterogeneity than
regression (\ref{eq:TWFE intro}). For dynamic treatment effects,
researchers are usually interested in estimating some average of treatment
effects from $\ell$ periods relative to the treatment. Our alternative
method estimates the shares of cohort as weights. These weights are
more interpretable than the weights underlying regression (\ref{eq:TWFE intro})
in the presence of treatment effects heterogeneity, and the resulting
weighted average of treatment effects extends beyond a convex combination
of treatment effects \citep{sloczynski_general_2018}. As discussed
in Section \ref{subsec:Difference-in-differences-estima}, using the
procedures of \citet{callaway_santanna_ssrn2018}, our alternative
method can also accommodate covariates.

We illustrate both our decomposition results and our alternative method
via an empirical application, estimating the dynamic effects of a
hospitalization. We follow \citet{dobkin_finkelstein_kluender_notowidigdo_aer2018}
in using the publicly-available dataset, Health and Retirement Study
(HRS), to first estimate two-way fixed effects regressions. We then
illustrate our alternative estimation method with this example. Among
the outcomes studied by \citet{dobkin_finkelstein_kluender_notowidigdo_aer2018},
we focus on out-of-pocket medical spending and labor earnings. Our
alternative method yields similar big-picture findings as the original
paper that uses two-way fixed effects regressions: the earnings decline
due to hospitalization is substantial compared to the transitory out-of-pocket
spending increase. However, the two-way fixed effects estimates sometimes
fall outside the convex hull of the underlying effects. In contrast,
estimates using our alternative method, by construction, are guaranteed
to be easy-to-interpret because they are weighted averages of the
underlying effects, with weights corresponding to cohort shares.

The rest of the paper is organized as follows. In the next subsection,
we review the theoretical literature. Section \ref{sec:Event-studies-in-a-potential}
formally introduces the event study design and discusses our definition
in relation to the applied literature. Section \ref{sec:Estimators-from-linear}
derives the estimands of two-way fixed effects regression, and introduces
sufficient assumptions for them to be causally interpretable. Section
\ref{sec:Alternative-method} develops our alternative estimator.
Section \ref{sec:Applications} illustrates our results using an empirical
example and Section \ref{sec:Conclusions} concludes. All proofs are
contained in the Online Appendix.

\subsection{Related Literature}

This paper makes two main contributions within an active literature
on the causal interpretations of two-way fixed effects models in settings
with staggered treatment adoption (\citealp{athey_imbens_staggered_adoption};
\citealp{kirill}; \citealp{callaway_santanna_ssrn2018}; \citealp{double_fe};
\citealp{goodman-bacon_difference--differences_2018}). Our paper
is also related to the traditional literature analyzing non-separable
panel and treatment effects models e.g. \citealp{heckman_ichimura_smith_todd_ecma1998,heckman_ichimura_todd_restud1997,blundell_jeea2004,semiparametric_did,victor_panel}.

The first main contribution of our paper is to interpret estimates
from two-way fixed effects specifications when researchers include
\textquotedblleft dynamic\textquotedblright{} indicators for time
relative to treatment and when treatment effects are heterogeneous
across adoption cohorts. We derive our results for a general class
of two-way fixed effects specifications where \textquotedblleft dynamic\textquotedblright{}
indicators can be flexibly specified as single relative periods $\ell$
or sets of relative periods $g$ (thus also capturing any \textquotedblleft static\textquotedblright{}
specification where all post-treatment indicators are collected in
a single set). This class of specifications encompasses all specifications
addressed by \citet{athey_imbens_staggered_adoption}, \citet{kirill},
\citet{callaway_santanna_ssrn2018}, \citet{double_fe} and \citet{goodman-bacon_difference--differences_2018}.

As a building block for the causal interpretation of estimates, we
define $CATT_{e,\ell}$, the cohort average treatment effects on the
treated as the cohort-specific average difference in outcomes relative
to never being treated. Our choice of a \textquotedblleft building
block\textquotedblright{} is governed by the counterfactual and the
type of heterogeneity of interest. This object coincides with the
\textquotedblleft group-time average treatment effect\textquotedblright{}
studied by \citet{callaway_santanna_ssrn2018} and is more granular
than the building block used by \citet{goodman-bacon_difference--differences_2018}
that is an average of $CATT_{e,\ell}$ over some relative period range.
\citet{athey_imbens_staggered_adoption} consider an alternate counterfactual
to never being treated: being treated at a different time. \citet{kirill}
implicitly assume away heterogeneity across cohorts within a relative
period, so their building block reduces to $ATT_{\ell}$. \citet{double_fe}
allow for heterogeneous treatment paths within a cohort over time
across \textquotedblleft groups\textquotedblright , thus their building
block is at the group level. We defer a discussion of assumptions
underlying the causal interpretation of these building blocks to Section
\ref{sec:Event-studies-in-a-potential}.

The second main contribution of our paper is to propose a simple regression-based
alternative estimation strategy that produces a more sensible estimand
than conventional two-way fixed effects models under heterogeneous
treatment effects. Our procedure is most similar to \citet{callaway_santanna_ssrn2018},
but has the following differences. First, in the setting where there
is no never-treated group, our method uses the last cohort to be treated
as a control group, whereas \citet{callaway_santanna_ssrn2018} use
the set of not-yet-treated cohorts. Our method and theirs thus rely
on different, but non-nested parallel trends assumptions. Second,
our estimation method can be cast as a regression specification and
thus may be more familiar to applied researchers. However, a third
difference is that the procedure of \citet{callaway_santanna_ssrn2018}
allows for conditioning on time-varying covariates. \citet{double_fe}
and \citet{goodman-bacon_difference--differences_2018} respectively
propose alternative estimators and diagnostic tools for estimation
of causal effects in staggered settings, but do not consider the estimation
of the dynamic path of treatment effects as we do. 

\section{Event studies design\label{sec:Event-studies-in-a-potential}}

In this section we first formalize the ``event studies design''.
As discussed in Section \ref{subsec:Relevance-in-the}, based on how
this term is deployed in the empirical literature, an \emph{event
study }design is a \emph{staggered adoption }design where units are
treated at different times, and there may or may not be never treated
units. It also nests a \emph{difference-in-differences }design, where
units are either first treated at time $t_{0}$ or never treated.

Specifically, we consider a setting with a random sample of $N$ units
observed over $T+1$ time periods, where $T$ is fixed. For each $i\in\{0,\dots,N\}$
and $t\in\{0,...,T\}$, we observe the outcome $Y_{i,t}$ and treatment
status $D_{i,t}\in\left\{ 0,1\right\} $: $D_{i,t}=1$ if $i$ is
treated in period $t$ and $D_{i,t}=0$ if $i$ is not treated in
period $t$. Throughout we assume that the observations $\{Y_{i,t},D_{i,t}\}_{t=0}^{T}$
are independent and identically distributed (i.i.d.). 

In the general case of event studies we focus on an \emph{absorbing
}treatment such that the treatment status over time is a non-decreasing
sequence of zeros and then ones, i.e. $D_{i,s}\leq D_{i,t}$ for $s<t$.
We can thus uniquely characterize a treatment path by the time period
of the initial treatment, denoted with $E_{i}=\min\left\{ t:D_{i,t}=1\right\} $.
If unit $i$ is never treated i.e. $D_{i,t}=0$ for all $t$, we set
$E_{i}=\infty$. Based on when they first receive the treatment, we
can also uniquely categorize units into disjoint cohorts $e$ for
$e\in\{0,\dots,T,\infty\}$, where units in cohort $e$ are first
treated at the same time $\{i:E_{i}=e\}$. 

We define $Y_{i,t}^{e}$ to be the potential outcome in period $t$
when unit $i$ is first treated in time period $e$. We define $Y_{i,t}^{\infty}$
to be the potential outcome if unit $i$ never receives the treatment,
which we call the ``baseline outcome''. Since the timing of the
initial treatment uniquely characterizes one's treatment path, we
can represent the observed outcome for unit $i$ as
\begin{align}
Y_{i,t}=Y_{i,t}^{E_{i}} & =Y_{i,t}^{\infty}+\sum_{0\leq e\leq T}(Y_{i,t}^{e}-Y_{i,t}^{\infty})\cdot\mathbf{1}\left\{ E_{i}=e\right\} .
\end{align}

For any treatment that is not absorbing, if we replace the treatment
status $D_{i,t}$ with an indicator for ever having received the treatment,
the new treatment is absorbing by construction.  Oftentimes the effect
of having ever received the treatment is of interest, as it captures
the path of treatment effects even though the treatment itself may
be transient. For example, \citet{deryugina_aej2017} is interested
in the fiscal cost for a county that has been hit by a hurricane.
While a hurricane itself may be transient, the impact of having had
a hurricane may not be transient, hence why \citet{deryugina_aej2017}
codes the year of the first hurricane experienced in a county as $E_{i}$.

In the next section, we use the notation developed above to define
the treatment effect of an event study design. 

\subsection{Defining treatment effect of an event study design}

In an event study design, we define the unit-level \emph{treatment
effect}  as the difference between the observed outcome relative to
the never-treated counterfactual outcome: $Y_{i,t}-Y_{i,t}^{\infty}$.
Recall that $Y_{i,t}^{\infty}$ denotes the potential outcome if unit
$i$ never receives the treatment. This particular counterfactual
outcome $Y_{i,t}^{\infty}$ is a reasonable ``baseline outcome'',
though other counterfactual outcomes may be of interest as well. For
example, \citet{athey_imbens_staggered_adoption} also consider the
treatment effect relative to the always-treated counterfactual outcome:
$Y_{i,t}-Y_{i,t}^{0}$. \citet{sianesi_restat2004} defines the unit-level
treatment effect to be relative to the not-yet-treated counterfactual
outcome: $Y_{i,t}-Y_{i,t}^{e}$ for $e>t$.

When dynamic treatment effects are of interest, empirical researchers
commonly report the coefficient estimate $\widehat{\mu}_{\ell}$ associated
with indicators for being $\ell$ periods relative to the treatment
in regression (\ref{eq:TWFE intro}) as an estimate for the average
lagged effect. To assess the causal interpretation of $\mu_{\ell}$,
we need ``building blocks'' for its decomposition, which are the
average of unit-level treatment effects at a given relative period
across units first treated at time $E_{i}=e$, i.e. units in the same
cohort $e$. We call this average the cohort-specific average treatment
effects on the treated, formally defined below. Later in Section \ref{sec:Estimators-from-linear}
we use them as building blocks for the interpretation of the relative
period coefficients $\mu_{\ell}$ from two-way fixed effects regressions.
\begin{defn}
\emph{\label{def:CATTel} }The cohort-specific average treatment effect
on the treated (CATT) $\ell$ periods from initial treatment is 
\begin{equation}
CATT_{e,\ell}=E[Y_{i,e+\ell}-Y_{i,e+\ell}^{\infty}\mid E_{i}=e].
\end{equation}
\end{defn}
Each $CATT_{e,\ell}$ represents the average treatment effect $\ell$
periods from the initial treatment for the cohort of units first treated
at time $e$. We shift from calendar time index $t$ to relative period
index $\ell$ which denotes the periods since treatment; for cohort
$e$, $\ell$ ranges from $-e$ to $T-e$ because we observe at most
$e$ periods before the initial treatment and $T-e$ periods after
the initial treatment. Relative periods allow us to compare cohorts
while holding their exposure to the treatment constant.

\subsection{Identifying assumptions}

With the above definitions, we formalize three potential identifying
assumptions for outcomes of interest in our event study design. The
first assumption is a generalized form of a parallel trends assumption.
The second assumption requires no anticipation of the treatment. The
third assumption imposes no variation across cohorts. For each assumption,
we first discuss its meaning and then compare it with similar assumptions
made in the literature interpreting two-way fixed effects regressions.
Later in Section \ref{sec:Estimators-from-linear} we interpret the
relative period coefficients $\mu_{\ell}$ from two-way fixed effects
regressions under different combinations of these assumptions.
\begin{assumption}
\emph{\label{assum:PT-baseline}(Parallel trends in baseline outcomes.)}
For all $s\neq t$, the $E[Y_{i,t}^{\infty}-Y_{i,s}^{\infty}\vert E_{i}=e]$
is the same for all $e\in supp(E_{i})$.
\end{assumption}
If an application includes never-treated units so that $\infty\in supp(E_{i})$,
we need to especially consider whether these never-treated units satisfy
the parallel trends assumption. Never-treated units are likely to
differ from ever-treated units in many ways, and may not share the
same evolution of baseline outcomes. If the never-treated units are
unlikely to satisfy the parallel trends assumption, then we should
exclude them from the estimation to avoid violation of this assumption.

While common in the applied literature, the parallel trends assumption
is strong and oftentimes violated. For example, \citet{ashenfelter_1978}
documented that participants in job training programs experience a
decline in earnings prior to the training period (Ashenfelter\textquoteright s
dip). The timing of job training is dependent on the evolution of
individual's baseline earnings, and this scenario therefore does not
satisfy the parallel trends assumption. Proposition \ref{prop: dyno FE coeff weights-1}
is the only result in this paper derived without this assumption,
but there is an active literature studying inference under violations
of the parallel trends assumption e.g. \citet{roth_rambachan_2020}. 

Our parallel trends assumption coincides with that of \citet{double_fe}.
One could substitute this assumption with a different identifying
assumption that baseline outcomes are mean independent of $E_{i}$
i.e. at each $t$, $E[Y_{i,t}^{\infty}\vert E_{i}=e]$ is the same
for all $e\in supp\left(E_{i}\right)$ and in particular is equal
to $E[Y_{i,t}^{\infty}]$. This stronger assumption is plausible when
the timing of treatment is indeed randomized, which is the assumption
used by \citet{athey_imbens_staggered_adoption}. By taking the ``fully
dynamic'' specification as their DGP, \citet{kirill} implicitly
assume this version of a parallel trends assumption. \citet{callaway_santanna_ssrn2018}
propose a weaker version that is conditional on covariates.  Finally,
for a particular estimand, \citet{goodman-bacon_difference--differences_2018}
identifies a weaker version that only requires a weighted average
of $E[Y_{i,t}^{\infty}-Y_{i,s}^{\infty}\vert E_{i}=e]$ (averaged
across cohorts) to be zero.
\begin{assumption}
\label{assum:no-anti}\emph{(No anticipatory behavior prior to treatment.)}
There is no treatment effect in pre-treatment periods i.e. $E[Y_{i,e+\ell}^{e}-Y_{i,e+\ell}^{\infty}\mid E_{i}=e]=0$
for all $e\in supp(E_{i})$ and all $\ell<0$.
\end{assumption}
Assumption \ref{assum:no-anti} requires potential outcomes in any\emph{
}$\ell$\emph{ }periods  before treatment to be equal to the baseline
outcome on average as in \citet{malani_reif_anti_jpube2015} and \citet{botosaru_gutierrez_jae2018}.
This is most plausible if the full treatment paths are not known to
units. If they have private knowledge of the future treatment path
they may change their behavior in anticipation and thus the potential
outcome prior to treatment may not represent baseline outcomes. For
example, \citet{hendren_consumption} shows that knowledge of future
job loss leads to decreases in consumption. If the periods with anticipation
behavior are known, then we may consider an alternative version of
Assumption \ref{assum:no-anti}, which holds for pre-periods in a
subset of pre-treatment periods. Depending on the application, it
may still be plausible to assume no anticipation until $K$ periods
before the treatment.

The no anticipation assumption proposed by \citet{athey_imbens_staggered_adoption}
is a deterministic condition which stipulates that $Y_{i,e+\ell}^{e}=Y_{i,e+\ell}^{\infty}$
for all units $i$ and $e$ and $\ell<0$. By taking the ``fully
dynamic'' specification as their DGP, \citet{kirill} allow anticipation
by including pre-trends indicators in the DGP. \citet{callaway_santanna_ssrn2018}
and \citet{goodman-bacon_difference--differences_2018} implicitly
assume no anticipation by using observed outcomes in time periods
before the initial treatment as the untreated potential outcomes.
\begin{assumption}
\emph{\label{assum:homogeneous-TE}(Treatment effect homogeneity.)
}For each relative period $\ell$, $CATT_{e,\ell}$ does not depend
on cohort $e$ and is equal to $ATT_{\ell}$. 
\end{assumption}
Assumption \ref{assum:homogeneous-TE} requires that each cohort experiences
the same path of treatment effects. Treatment effects need to be the
same across cohorts in every relative period for homogeneity to hold,
whereas for heterogeneity to occur, treatment effects only need to
differ across cohorts in one relative period. The assumption of treatment
effect homogeneity is therefore strong, and in Section \ref{subsec:Sources-of-treatment},
we describe how it can be violated in applied settings.

Our notion of treatment effect homogeneity does not preclude dynamic
treatment effects; it only imposes that cohorts share the same path
of treatment effects. The related literature sometimes formulates
restrictions on the dynamics of treatment effects as another notion
of treatment effect homogeneity. \citet{athey_imbens_staggered_adoption}
propose an assumption that ``restricts the heterogeneity of the treatment
effects over time,'' which implies $CATT_{e,\ell}$ can vary over
$e$ but not over $\ell$. \citet{kirill} refer to one type of treatment
effects heterogeneity as ``only across the time horizon,'' which
implies  $CATT_{e,\ell}$ can vary over $\ell$ but not over $e$.
\citet{callaway_santanna_ssrn2018} allow for ``arbitrary treatment
effect heterogeneity'' when $CATT_{e,\ell}$ varies across cohorts
and over time. Similarly, \citet{double_fe} describe treatment effects
that may be ``heterogeneous across groups and over time periods.''
\citet{goodman-bacon_difference--differences_2018} allow heterogenous
effects to either ``vary across units but not over time'' or ``vary
over time but not across units.'' The literature has not converged
on a single notion of treatment effects heterogeneity with time-varying
treatment. Since researchers are interested in dynamic treatment effects
when using a ``dynamic'' specification, we do not restrict the path
of treatment effects but rather use ``heterogeneity'' to describe
variation across cohorts only. 

\subsection{Relevance in the applied literature\label{subsec:Relevance-in-the} }

To gauge the empirical relevance of our results, we survey the estimation
methods used by the twelve papers collected by \citet{roth_pretest_2019}
from three leading economics journals that contain the phrase ``event
study'' in their main text.\footnote{We follow the selection criteria in \citet{roth_pretest_2019}: the
original sample consists of 70 total papers, but is further constrained
to these twelve papers with publicly available data and code. The
data and code are used to determine exactly the specification estimated
in these papers.} From this sample of applied papers, we learn what specifications
empirical researchers are actually using when estimating two-way fixed
effects regressions. Four papers in this sample consider the simple
setting where units either receive their first treatment at the same
time or never receive the treatment. The other eight papers in this
sample consider the more complex setting where treated units receive
their treatment at various times, and there may or may not be never
treated units. This observation suggests an \emph{event study} in
the applied literature nests two popular research designs: \emph{difference-in-differences
}design, but also the design where units receive their first treatments
at various times, which is our focus.\footnote{This setup is the same as the \emph{staggered adoption }design proposed
by \citet{athey_imbens_staggered_adoption}, but we keep the term
\emph{event study} because it is common in the applied literature. } 

We summarize the main specifications in this sample of twelve applied
papers in Table \ref{tab:Selected-publications}.\footnote{We focus on the first specification underlying the event study estimates
in each paper, which we view as a reasonable proxy for the main specification
in the paper.} The columns of Table \ref{tab:Selected-publications} collect key
properties of these specifications. In Section \ref{subsec:Possible-specifications},
we introduce a general class of specifications that encompasses all
of these specification. The estimates for relative period coefficients
$\mu_{\ell}$ from all these papers therefore fall under our analysis
in the next section. 

These papers demonstrated that event studies are used to address a
broad range of research questions. As an example of this literature,
\citet{bailey_goodman_bacon_aer2015} use the rollout of the first
Community Health Centers (CHCs) to study the longer-term health effects
of increasing access to primary care. As another example, \citet{tewari_aej2014}
uses variation in the timing of deregulation across states to estimate
the impact of financial development on homeownership. 

\section{Estimators from linear two-way fixed effects regression\label{sec:Estimators-from-linear}}

We consider a two-way fixed effects (FE) regression of the following
form, estimated on a panel of $i=1,\dots,N$ units for $t=0,1,\dots,T$
\emph{calendar }time periods:
\begin{equation}
Y_{i,t}=\alpha_{i}+\lambda_{t}+\sum_{g\in\mathcal{G}}\mu_{g}\mathbf{1}\{t-E_{i}\in g\}+\upsilon_{i,t}\label{eq:dynamic}
\end{equation}
Here $Y_{i,t}$ is the outcome of interest for unit $i$ at time $t$,
$E_{i}$ is the time for unit $i$ to initially receive a binary absorbing
treatment, and $\alpha_{i}$ and $\lambda_{t}$ are the unit and time
fixed effects. The set $\mathcal{G}$ collects disjoint sets $g$
of \emph{relative periods} $\ell\in[-T,T]$. We allow some relative
periods to be excluded from the specification and denote the excluded
set with $g^{excl}=\{\ell:\ell\not\in\underset{g\in\mathcal{G}}{\bigcup}g\}$.
We denote by $\mu_{g}$ the relative period coefficients from regression
(\ref{eq:dynamic}), i.e. the population regression coefficients.
Their corresponding OLS estimators are denoted by $\widehat{\mu}_{g}$
respectively. 

We are interested in the properties of $\mu_{g}$ when there are variations
in the initial treatment timing, and there may or may not be never-treated
units. Below in Section \ref{subsec:Possible-specifications} we illustrate
how the choice of $\mathcal{G}$ coincides with a large number of
specifications encountered in practice such as the ``fully dynamic''
specification. We next decompose $\mu_{g}$ in terms of $CATT_{e,\ell}$
when various combinations of the three identifying assumptions fail.
For Propositions \ref{prop: dyno FE coeff weights-1}-\ref{prop:dyno FE coeff weights no anti},
we state the results in terms of the general specification (\ref{eq:dynamic}).
To specialize these results to the ``fully dynamic'' specification
(\ref{eq:dynamic FE}), we note the corresponding decomposition would
replace bins with $g=\{\ell\}$ for each relative period $\ell$ included
in the specification and $g^{excl}$ would contain all excluded relative
periods. The decomposition remains unchanged though the summation
over $\ell\in g$ simplifies since each $g$ is a singleton. For Proposition
\ref{prop:dyno FE coeff weights homo}, the decomposition further
simplifies for the ``fully dynamic'' specification as discussed
below.

Researchers may assume that $\mu_{g}$ can be interpreted as a convex
average of $CATT_{e,\ell}$ for periods $\ell\in g$ from its corresponding
set $g$; they may further assume the underlying weights have policy-relevant
interpretation, e.g. weights depending on proportions of cohorts.
For example, \citet{bailey_goodman_bacon_aer2015} interpret them
as ``intention-to-treat effects'' of the treatment in a given relative
year. However, our results show that $\mu_{g}$ may not represent
the parameter of interest without strong assumptions such as treatment
effect homogeneity. Section \ref{subsec:Intuition-for-contamination}
provides intuition for these negative results, and demonstrate the
weights are actually non-linear functions of proportions of cohorts.
Section \ref{subsec:Invalidity-of-pretrend} illustrates how treatment
effects heterogeneity invalidates the pretrends test for a simple
three-period setting.

\subsection{Common specifications\label{subsec:Possible-specifications} }

Common specifications can be broadly categorized as either ``static''
or ``dynamic''. Static specifications estimate a single treatment
effect that is time invariant. In contrast, dynamic specifications
allow for non-parametric changes in the treatment effects over time.
Within dynamic specifications, researchers also need to address issues
of multi-collinearity, and may bin or trim distant relative periods.
All of these choices can be written as instances of (\ref{eq:dynamic})
with the correct specification of $\mathcal{G}$, meaning that our
results are applicable for a wide range of specifications employed
in the empirical literature.

To clarify how to specify $\mathcal{G}$ in regression (\ref{eq:dynamic}),
we define $D_{i,t}^{\ell}\coloneqq\mathbf{1}\{t-E_{i}=\ell\}$ to
be an indicator for unit $i$ being $\ell$ periods away from initial
treatment at calendar time $t$. For never-treated units $E_{i}=\infty$,
we set $D_{i,t}^{\ell}=0$ for all $\ell$ and all $t$. We can represent
the relative period bin indicator as
\begin{equation}
\mathbf{1}\{t-E_{i}\in g\}=\sum_{\ell\in g}\mathbf{1}\{t-E_{i}=\ell\}=\sum_{\ell\in g}D_{i,t}^{\ell}.
\end{equation}

\textbf{Static specification. }For a ``static'' specification $\mathcal{G}$
contains a single element equal to $g=[0,T]$. The indicator $\mathbf{1}\{t-E_{i}\in g\}$
is equivalent to an indicator for whether unit $i$ has received its
initial treatment by $t$: $\mathbf{1}\{E_{i}\leq t\}$. The ``static''
specification thus takes the following form
\begin{equation}
Y_{i,t}=\alpha_{i}+\lambda_{t}+\mu_{g}\sum_{\ell\geq0}D_{i,t}^{\ell}+\upsilon_{i,t}
\end{equation}
and the corresponding set of excluded relative periods is $g^{excl}=[-T,-1]$.

\textbf{Dynamic specification. }``Dynamic'' specifications encompass
any specifications of $\mathcal{G}$ where $\mathcal{G}$ contains
more than one element, thus treatment effects are allowed to vary
over time non-parametrically. In its most flexible form, a ``fully
dynamic'' specification takes the following form 
\begin{equation}
Y_{i,t}=\alpha_{i}+\lambda_{t}+\sum_{\ell=-K}^{-2}\mu_{\ell}D_{i,t}^{\ell}+\sum_{l=0}^{L}\mu_{\ell}D_{i,t}^{\ell}+\upsilon_{i,t}\label{eq:dynamic FE}
\end{equation}
and the corresponding set of excluded relative periods is $g^{excl}=\{-T,\dots,-K-1,-1,L+1,\dots,T\}$. 

Excluding some relative periods from the ``fully dynamic'' specification
is necessary to avoid multi-collinearity, either among the relative
period indicators $D_{i,t}^{\ell}$, or with the unit and time fixed
effects. For example, when there are no never-treated units i.e. $\infty\not\in supp(E_{i})$
but with a panel balanced in calendar time, we need to exclude at
least two relative period indicators in $\mathcal{G}$. These collinearities
are discussed by \citet{kirill}: one multi-collinearity comes from
the relative period indicators summing to one for every unit $\sum_{\ell\in[-T,T]}D_{i,t}^{\ell}=1$,
and the other multi-collinearity comes from the linear relationship
between two-way fixed effects and the relative period indicators,
namely $t-E_{i}=\ell$. 

Excluding relative periods close to the initial treatment is common
in practice. Normalizing relative to the period prior to treatment
is the most common - six out of the eight papers we survey do so,
as reflected in the above specification where we drop $D_{i,t}^{-1}$.
The remaining two papers exclude $D_{i,t}^{0}$. 

Excluding distant relative periods is however less common (only one
of the eight papers we survey does so). Instead researchers ``bin''
or ``trim'' distant relative periods. For ``binning'', researchers
bin distant relative periods into $[-T,-K)$ and $(L,T]$ and estimate
a ``binned'' specification 
\begin{equation}
Y_{i,t}=\alpha_{i}+\lambda_{t}+\beta\cdot\sum_{\ell<-K}D_{i,t}^{\ell}+\sum_{\ell=-K}^{-2}\mu_{\ell}D_{i,t}^{\ell}+\sum_{l=0}^{L}\mu_{\ell}D_{i,t}^{\ell}+\gamma\cdot\sum_{\ell>L}D_{i,t}^{\ell}+\upsilon_{i,t}\label{eq:dynamic FE bin}
\end{equation}
without excluding any distant relative periods so that $g^{excl}=\{-1\}$.
For ``trimming'', researchers trim their panel to be balanced in
relative periods. 

Neither \textquotedblleft binning\textquotedblright{} nor \textquotedblleft trimming\textquotedblright{}
resolves the issues of contamination discussed below (i.e. the possibility
that treatment effects from other periods affect the estimate for
a given $\mu_{g}$). We show the contamination issue for the general
specification (\ref{eq:dynamic}) encompasses both practices. For
a given coefficient in the dynamic specification, \textquotedblleft trimming\textquotedblright{}
does however mechanically remove any treatment effects from the relative
periods \textquotedblleft trimmed\textquotedblright{} from the specification.
For the static specification put forth in \citet{kirill}, they noted
that \textquotedblleft trimming\textquotedblright{} also does not
resolve the contamination issue they identified with the static specification.

\subsection{Interpreting the coefficients under no assumptions}

First we show that without any assumptions, we can write $\mu_{g}$
as a linear combination of differences in trends.
\begin{prop}
\label{prop: dyno FE coeff weights-1} The population regression coefficient
on relative period bin $g$ is a linear combination of differences
in trends from its own relative period $\ell\in g$, from relative
periods $\ell\in g'$ belonging to other bins $g'\neq g$ but included
in the specification, and from relative periods excluded from the
specification $\ell\in g^{excl}$:
\begin{align}
\mu_{g}= & \sum_{\ell\in g}\sum_{e}\omega_{e,\ell}^{g}\left(E[Y_{i,e+\ell}-Y_{i,0}^{\infty}\vert E_{i}=e]-E[Y_{i,e+\ell}^{\infty}-Y_{i,0}^{\infty}]\right)\label{eq:own rel time}\\
 & +\sum_{g'\neq g,g'\in\mathcal{G}}\sum_{\ell\in g'}\sum_{e}\omega_{e,\ell}^{g}\left(E[Y_{i,e+\ell}-Y_{i,0}^{\infty}\vert E_{i}=e]-E[Y_{i,e+\ell}^{\infty}-Y_{i,0}^{\infty}]\right)\label{eq:other incl rel time}\\
 & +\sum_{\ell\in g^{excl}}\sum_{e}\omega_{e,\ell}^{g}\left(E[Y_{i,e+\ell}-Y_{i,0}^{\infty}\vert E_{i}=e]-E[Y_{i,e+\ell}^{\infty}-Y_{i,0}^{\infty}]\right).\label{eq:other excl rel time}
\end{align}
We use the superscript $g$ to associate the weight $\omega_{e,\ell}^{g}$
with the coefficient $\mu_{g}$. The weight $\omega_{e,\ell}^{g}$
is equal to the population regression coefficient on $\mathbf{1}\{t-E_{i}\in g\}$
from regressing $D_{i,t}^{\ell}\cdot\mathbf{1}\left\{ E_{i}=e\right\} $
on all bin indicators $\{\mathbf{1}\{t-E_{i}\in g\}\}_{g\in\mathcal{G}}$
included in the specification (\ref{eq:dynamic}) and two-way fixed
effects. 
\end{prop}
The above proposition is a direct result of regression mechanics.
We provide an intuitive derivation for the closed-form expressions
for the weights using the classical ``omitted variables bias formula''
in Section \ref{subsec:Intuition-for-contamination}. We defer the
formal derivation to the Appendix. Here we mention the following four
properties of the weights $\omega_{e,\ell}^{g}$. 
\begin{itemize}
\item For relative periods of $\mu_{g}$'s own bin i.e. $\ell\in g$, their
associated weights as displayed in (\ref{eq:own rel time}) sum to
 one $\sum_{\ell\in g}\sum_{e}\omega_{e,\ell}^{g}=1$. 
\item For relative periods belonging to some other bin included in (\ref{eq:dynamic})
i.e. $\ell\in g'$ for $g'\neq g$ and $g'\in\mathcal{G}$, their
associated weights as displayed in (\ref{eq:other incl rel time})
sum to  zero $\sum_{\ell\in g'}\sum_{e}\omega_{e,\ell}^{g}=0$ for
each bin $g'$. 
\item For relative periods not included in $\mathcal{G}$, their associated
weights as displayed in (\ref{eq:other excl rel time}) sum to  negative
one $\sum_{\ell\in g^{excl}}\sum_{e}\omega_{e,\ell}^{g}=-1$. 
\item If there are never-treated units i.e. $\infty\in supp(E_{i})$, we
have $\omega_{\infty,\ell}^{g}=0$ for all $g$ and $\ell$. 
\end{itemize}
We can easily estimate the weights $\omega_{e,\ell}^{g}$ for any
given specification of $\mathcal{G}$ using the following auxiliary
regression: 
\begin{equation}
D_{i,t}^{\ell}\cdot\mathbf{1}\left\{ E_{i}=e\right\} =\alpha_{i}+\lambda_{t}+\sum_{g\in\mathcal{G}}\omega_{e,\ell}^{g}\mathbf{1}\{t-E_{i}\in g\}+\upsilon_{i,t}\label{eq:dynamic-weights}
\end{equation}
which regresses $D_{i,t}^{\ell}\cdot\mathbf{1}\left\{ E_{i}=e\right\} $
on all bin indicators included in regression (\ref{eq:dynamic}) and
two-way fixed effects. 

All of the above properties can be extended to a case where covariates
are added to regression (\ref{eq:dynamic}) by partialling out the
covariates before proceeding. In other words, the terms in parentheses
in (\ref{eq:own rel time}), (\ref{eq:other incl rel time}) and (\ref{eq:other excl rel time})
would be replaced by terms for which the covariates are partialled
out. The weights can be estimated by controlling for covariates in
regression (\ref{eq:dynamic-weights}) the same way as they are controlled
for in the original regression. 

However, covariates complicate the interpretations of $\mu_{g}$ in
terms of $CATT_{e,\ell}$ as we describe below in Proposition 2-4.
Depending on how covariates are controlled for in regression (\ref{eq:dynamic}),
we may need an additional assumption that the counterfactual trends
are linear in the time-varying covariates $X_{i,t}$. We leave a full
investigation of the introduction of covariates for future work.

\subsection{Interpreting the coefficients under parallel trends assumption only}
\begin{prop}
\label{prop: dyno FE coeff weights} Under Assumption \ref{assum:PT-baseline}
(parallel trends) only, the population regression coefficient on the
indicator for relative period bin $g$ is a linear combination of
$CATT_{e,\ell\in g}$ as well as $CATT_{e,\ell'}$ from other relative
periods $\ell'\not\in g$, with the same weights stated in Proposition
\ref{prop: dyno FE coeff weights-1}:
\begin{equation}
\mu_{g}=\sum_{\ell\in g}\sum_{e}\omega_{e,\ell}^{g}CATT_{e,\ell}+\sum_{g'\neq g,g'\in\mathcal{G}}\sum_{\ell'\in g'}\sum_{e}\omega_{e,\ell'}^{g}CATT_{e,\ell'}+\sum_{\ell'\in g^{excl}}\sum_{e}\omega_{e,\ell'}^{g}CATT_{e,\ell'}.\label{eq:dyno-expression-w-anti}
\end{equation}
\end{prop}
Under Assumption \ref{assum:PT-baseline}, the terms in Proposition
\ref{prop: dyno FE coeff weights-1} reduce to a linear combination
of the causally interpretable building blocks $CATT_{e,\ell}$ as
follows: 
\begin{equation}
E[Y_{i,e+\ell}-Y_{i,0}^{\infty}\vert E_{i}]-E[Y_{i,e+\ell}^{\infty}-Y_{i,0}^{\infty}]=CATT_{e,\ell}+\underbrace{E[Y_{i,t}^{\infty}-Y_{i,0}^{\infty}\vert E_{i}]-E[Y_{i,t}^{\infty}-Y_{i,0}^{\infty}]}_{=0}
\end{equation}
for $t=e+\ell$. However, two issues for interpretability remain.
First, the coefficient $\mu_{g}$ can be written as an average of
not only $CATT_{e,\ell}$ from own periods $\ell\in g$, but also
$CATT_{e,\ell^{\prime}}$ from other periods. Second, the weights
are still non-linear functions of the distribution of cohorts, same
as those in (\ref{eq:dynamic-weights}), and they are not restricted
to lie in $[0,1]$. 

The properties of these weights as described following Proposition
\ref{prop: dyno FE coeff weights-1} imply that contamination from
other periods wanes once we impose restrictions on treatment effects.
In the next two subsections we illustrate how that can happen.

\subsection{Interpreting the coefficients under parallel trends and no anticipation
assumptions }
\begin{prop}
\label{prop:dyno FE coeff weights no anti}If Assumption \ref{assum:PT-baseline}
(parallel trends) holds and Assumption \ref{assum:no-anti} (no anticipatory
behavior in all periods before the initial treatment) holds, the population
regression coefficient $\mu_{g}$ is a linear combination of post-treatment
$CATT_{e,\ell'}$ for all $\ell'\geq0$, with the same weights stated
in Proposition \ref{prop: dyno FE coeff weights-1}:
\begin{equation}
\mu_{g}=\sum_{\ell'\in g,\ell'>0}\sum_{e}\omega_{e,\ell}^{g}CATT_{e,\ell}+\sum_{g'\neq g,g'\in\mathcal{G}}\sum_{\ell'\in g',\ell'>0}\sum_{e}\omega_{e,\ell'}^{g}CATT_{e,\ell'}+\sum_{\ell'\in g^{excl},\ell'>0}\sum_{e}\omega_{e,\ell'}^{g}CATT_{e,\ell'}.\label{eq:dyno-no-anti}
\end{equation}
\end{prop}
Once we restrict pre-treatment $CATT_{e,\ell\leq0}$ to be zero under
the no anticipatory behavior assumption, the expression for $\mu_{g}$
simplifies as terms involving $CATT_{e,\ell\leq0}$ drop out. However,
the second term in the expression for $\mu_{g}$ remains unless we
further impose treatment effect homogeneity for its summands to cancel
out each other. Thus, $\mu_{g}$ may be non-zero for pre-treatment
periods even if parallel trends holds.

This result immediately implies a shortcoming of using pre-treatment
coefficients (i.e. $\mu_{g}$ where $g$ contains only leads to the
treatment $\ell<0$) to test for pretrends. Under the no anticipatory
behavior assumption, cohort-specific treatment effects prior to treatment
are all zero: $CATT_{e,\ell}=0$ for all $\ell<0$. Therefore, any
linear combination of these $CATT_{e,\ell}$ is also zero. However,
$\mu_{g}$ is a function of post-treatment $CATT_{e,\ell'\geq0}$
as well, even when $g$ only contains elements with $\ell<0$. We
revisit this implication in greater depth in Section \ref{subsec:Invalidity-of-pretrend}.
\citet{callaway_santanna_ssrn2018} provides alternative tests for
pretrends that do not suffer from this drawback.

\subsubsection{Sources of treatment effect heterogeneity\label{subsec:Sources-of-treatment}}

Since treatment effects heterogeneity violates Assumption \ref{assum:homogeneous-TE}
and can alter how we interpret $\mu_{g}$, it is important to think
through when different cohorts likely experience different paths of
treatment effect. Such heterogeneity could arise for many reasons.
For example, cohorts may differ in their covariates, which affect
how they respond to treatment. We will explore a concrete example
in our application: if treatment effects differ with age, and there
is variation in age across units first treated at different times,
we will have heterogeneous effects (see Section \ref{sec:Applications}
for details). After controlling for covariates, cohorts may still
vary in their responses to the treatment if units select their initial
treatment timing based on treatment effects. This source of heterogeneity
is still compatible with our parallel trends assumption, which only
rules out selection in the initial treatment timing based on the evolution
of the baseline outcome. In addition to these two sources of heterogeneity,
treatment effects may vary across cohorts due to calendar time-varying
effects (e.g. macroeconomic conditions could govern the effects on
labor market outcomes across cohorts).

\subsection{Interpreting the coefficients under parallel trends and treatment
effect homogeneity}
\begin{prop}
\label{prop:dyno FE coeff weights homo}If Assumption \ref{assum:PT-baseline}
(parallel trends) holds and Assumption \ref{assum:homogeneous-TE}
(treatment effect homogeneity) holds, then $CATT_{e,\ell}=ATT_{\ell}$
is constant across $e$ for a given $\ell$, and the population regression
coefficient $\mu_{g}$ is equal to a linear combination of $ATT_{\ell\in g}$,
as well as $ATT_{\ell'\not\in g}$ from other relative periods:
\begin{equation}
\mu_{g}=\sum_{\ell\in g}\omega_{\ell}^{g}ATT_{\ell}+\sum_{g'\neq g}\sum_{\ell'\in g'}\omega_{\ell'}^{g}ATT_{\ell'}+\sum_{\ell'\in g^{excl}}\omega_{\ell'}^{g}ATT_{\ell'}\label{eq:dyno-lead-homo}
\end{equation}
The weight $\omega_{\ell}^{g}=\sum_{e}\omega_{e,\ell}^{g}$ sums over
the weights $\omega_{e,\ell}^{g}$ from Proposition \ref{prop: dyno FE coeff weights-1},
and is equal to the population regression coefficient from the following
auxiliary regression: 
\begin{equation}
D_{i,t}^{\ell}=\alpha_{i}+\lambda_{t}+\sum_{g\in\mathcal{G}}\omega_{e,\ell}^{g}\mathbf{1}\{t-E_{i}\in g\}+\upsilon_{i,t}\label{eq:dynamic-weights-1}
\end{equation}
which regresses $D_{i,t}^{\ell}$ on all bin indicators included in
regression (\ref{eq:dynamic}) and two-way fixed effects. 
\end{prop}
We note that even under treatment effect homogeneity $\mu_{\ell}$
can still be contaminated by treatment effects from the excluded periods.
This contamination, however, can be avoided by adjusting the specification
to only exclude periods with zero treatment effect.

For specifications with relative time bins, we note that there can
still be contamination from other bins as suggested by the second
term of expression (\ref{eq:dyno-lead-homo}). A sufficient condition
to avoid such contamination would be to group relative periods $\ell'$
into a bin only when their effects are the same since their weights
$\omega_{\ell'}^{g}$ would sum to zero.

For the ``fully dynamic'' specification (\ref{eq:dynamic FE}) where
all $g$'s are singletons of relative time periods, the weight $\omega_{\ell'}^{\ell}$
is zero for each relative period $\ell'\ne\ell$ that is included
in the specification. The decomposition therefore simplifies to
\begin{equation}
\mu_{\ell}=ATT_{\ell}+\sum_{\ell'\in g^{excl}}\omega_{\ell'}^{g}ATT_{\ell'}.\label{eq:dyno-FE-correct}
\end{equation}

\subsection{Intuition for contamination\label{subsec:Intuition-for-contamination}}

Proposition \ref{prop:dyno FE coeff weights no anti} demonstrates
that even under the assumptions of parallel trends and no anticipation,
estimates $\mu_{g}$ can still be contaminated by treatment effects
from other periods. In this section we explain the intuition behind
why\emph{ }this contamination occurs for the ``fully dynamic'' specification
(\ref{eq:dynamic FE}). A decomposition of $\mu_{\ell}$ into a weighted
average of $CATT_{e,\ell'}$ demonstrates that contamination is driven
by the interaction of two elements: the weights and $CATT_{e,\ell'}$.
The weights underlying the contamination are non-linear functions
of the distribution of the cohorts. We do not attempt to provide a
heuristic for determining the magnitude of the weights, but instead
describe how to estimate the weights and later on in Section \ref{sec:Alternative-method}
how to estimate each $CATT_{e,\ell}$. This allows researchers to
directly determine the degree of contamination in their application.
Our publicly-available Stata package {\footnotesize{}\texttt{eventstudyweights}}
automates the estimation of these weights using the panel dataset
underlying any given specification of (\ref{eq:TWFE intro}).

We apply the familiar omitted variable bias (OVB) formula to arrive
at our decomposition. In an event study where individuals receive
the treatment at different times, the panel can never be balanced
in both calendar time and time relative to the initial treatment.
As a result, the relative time indicators are still correlated even
after controlling for unit and time fixed effects in a two-way fixed
effects regression. We use the saturated regression and the OVB formula
to illustrate how this correlation leads to contamination. We defer
its formal derivation to Appendix \ref{sec:Proofs}. 

Under the parallel trends assumption only, the saturated regression
is

\begin{align}
Y_{i,t}= & \sum_{e}E[Y_{i,0}^{\infty}\mid E_{i}=e]\cdot\mathbf{1}\{E_{i}=e\}+\sum_{s}E[Y_{i,s}^{\infty}-Y_{i,0}^{\infty}]\cdot\mathbf{1}\{t=s\}\nonumber \\
 & +\sum_{\ell'\in g^{incl}}\sum_{e\in\mathcal{I}_{\ell'}}CATT_{e,\ell'}\cdot\left(D_{i,t}^{\ell'}\cdot\mathbf{1}\left\{ E_{i}=e\right\} \right)\nonumber \\
 & +\sum_{\ell'\in g^{excl}}\sum_{e\in\mathcal{I}_{\ell'}}CATT_{e,\ell'}\cdot\left(D_{i,t}^{\ell'}\cdot\mathbf{1}\left\{ E_{i}=e\right\} \right)+\epsilon_{i,t}\label{eq:dynamic saturated}
\end{align}
where the regressors are cohort fixed effects, time fixed effects,
and cohort-specific relative time indicators. Furthermore, let $g^{incl}$
collect the relative time included in (\ref{eq:dynamic FE}). The
coefficient associated with the cohort-specific relative time indicator
$D_{i,t}^{\ell}\cdot\mathbf{1}\left\{ E_{i}=e\right\} $ is the cohort-specific
average treatment effects $CATT_{e,\ell}$. To decompose the coefficient
$\mu_{\ell}$ from (\ref{eq:dynamic FE}) in terms of this saturated
regression (\ref{eq:dynamic saturated}), the OVB formula multiplies
the coefficients in the saturated regression (\ref{eq:dynamic saturated}),
$CATT_{e,\ell'}$, with the regression coefficients from (\ref{eq:dynamic-weights}),
$\omega_{e,\ell'}^{\ell}$, which leads to the following decomposition
for $\mu_{\ell}$ as a linear combination of $CATT_{e,\ell'}$:

\begin{equation}
\mu_{\ell}=\sum_{e,\ell'}\omega_{e,\ell'}^{\ell}CATT_{e,\ell'}.\label{eq:heterogenous decomp}
\end{equation}
 Since $\omega_{e,\ell'}^{\ell}$ is equal to a regression coefficient
from (\ref{eq:dynamic-weights}), we can write it as 
\begin{equation}
\omega_{e,\ell'}^{\ell}=(\sigma_{e,\cdot})^{\intercal}\mathbf{\Delta}_{e+\ell'}A_{\ell}^{-1}.\label{eq:heterogenous weight decomp}
\end{equation}
Below we briefly comment on each of the three elements in the above
expression to highlight how they depend on the distribution of the
cohorts. We defer their detailed definitions and derivations to Appendix
\ref{subsec:Expressions-weights}.
\begin{itemize}
\item $\sigma_{e,\cdot}$ is a vector of the covariance between cohort $e$
and the other cohorts, namely $Cov\left(\mathbf{1}\{E_{i}=e\},\mathbf{1}\{E_{i}=e'\}\right)$.
This term thus scales quadratically in the share of cohort $e$, and
is small for small cohorts.
\item $\mathbf{\Delta}_{t}$ is a matrix of demeaned relative time indicators.
The entry that corresponds to cohort $e'$ and relative time indicator
$D_{i,t}^{\ell}$ is $E[D_{i,t}^{\ell}\mid E_{i}=e']-\frac{1}{T+1}\mathbf{1}\left\{ e'\in\mathcal{I}_{\ell}\right\} $.
When $T$ is large, i.e. the panel is long, the second term is small
and this entry is therefore approximately equal to the relative time
indicator.
\item $A_{\ell}^{-1}$ is the row of $A^{-1}$ that corresponds to the relative
time indicator $D_{i,t}^{\ell}$ for $A$ the covariance matrix of
demeaned relative time indicators. Specifically, the entry in $A$
that corresponds to the covariance between demeaned $D_{i,t}^{\ell}$
and $D_{i,t}^{\ell'}$ is 
\begin{equation}
\sum_{t}Cov\left(D_{i,t}^{\ell},D_{i,t}^{\ell'}\right)-\frac{1}{T+1}Cov\left(\mathbf{1}\left\{ E_{i}\in\mathcal{I}_{\ell}\right\} ,\mathbf{1}\left\{ E_{i}\in\mathcal{I}_{\ell'}\right\} \right).
\end{equation}
Within any time period $D_{i,t}^{\ell}$ and $D_{i,t}^{\ell'}$ are
negatively correlated because no cohorts can be in these two relative
times at the same time. The second covariance term is in general also
non-zero because being in one cohort predicts (not) being in another
cohort. Therefore $A$ is in general not a diagonal matrix and $A^{-1}$
would depend on the distribution of the cohorts non-linearly.
\end{itemize}
The three elements of (\ref{eq:heterogenous weight decomp}) demonstrate
the weights are non-linear functions of the distribution of the cohorts,
and they are in general non-zero. Nonetheless these weights can be
estimated easily by the auxiliary regression (\ref{eq:dynamic-weights}). 

\subsection{Invalidity of pretrend tests based on pre-period coefficients. \label{subsec:Invalidity-of-pretrend}}

Contamination undermines the practice of testing for pretrends using
pre-period coefficients. Proposition \ref{prop:dyno FE coeff weights no anti}
implies that when effects are not homogenous across cohorts, it is
problematic to interpret non-zero estimates for $\mu_{g}$ as evidence
for pretrends, where the set $g$ contains some leads $\ell<0$. Proposition
\ref{prop:dyno FE coeff weights homo} implies that even with homogeneous
treatment effect, if the effects associated with the excluded periods
are not zero, then contamination may still occur. Therefore without
strong assumptions, pre-period coefficients should not be used to
test for pretrends because contamination can lead to estimates that
are non-zero in the absence of pretrends or zero in the presence of
pre-trends.

Testing for pretrends using pre-period coefficients is commonly used
in practice. As an example, \citet{he_wang_aej2017} mention \textquotedblleft the
estimated coefficients of the leads of treatments, i.e. $\delta_{k}$
for all $k\leq-2$ are statistically indifferent from zero\textquotedblright{}
as evidence for lack of pretrends. As another example, \citet{Chetty_qje2014}
assert ``there is no trend toward higher individual pension contributions
prior to year 0 ... as one would expect if individuals\textquoteright{}
tastes for saving were changing around the job switch'' based on
pre-period coefficient estimates. These tests are only appropriate
when the authors are willing to make strong assumptions.

To provide further intuition for why this test is not meaningful without
additional assumptions we walk through a simple example of the fully
dynamic specification. Consider a balanced panel with $T=2$ and cohorts
$E_{i}\in\{1,2\}$. There are at least two multi-colinearities from
including all four relative time indicators. To form the fully dynamic
specification we include $g^{incl}=\{-2,0\}$ and exclude $g^{excl}=\{-1,1\}$:

\begin{equation}
Y_{i,t}=\alpha_{i}+\lambda_{t}+\sum_{\ell\in\{-2,0\}}\mu_{\ell}D_{i,t}^{\ell}+\upsilon_{i,t}.\label{eq:weights ex rel times-1}
\end{equation}
The choice of $g^{excl}$ is based on the common practice of normalizing
relative to the $-1$ period and distant lags. 

When there are no never treated units, we can express the pre-trend
coefficient $\mu_{-2}$ in terms of $CATT$s:
\begin{align}
\mu_{-2}= & \underbrace{CATT_{2,-2}}_{\text{own period}}+\underbrace{\frac{1}{2}CATT_{1,0}-\frac{1}{2}CATT_{2,0}}_{\ell'\in g^{incl},\ell'\neq-2}+\underbrace{\frac{1}{2}CATT_{1,1}-CATT_{1,-1}-\frac{1}{2}CATT_{2,-1}}_{\ell'\in g^{excl}}
\end{align}
It is apparent the weights maintain the structure described in Proposition
\ref{prop: dyno FE coeff weights-1}. Without any anticipation effect,
the effects $CATT_{e,\ell<0}$ are zero and thus we expect $\mu_{-2}$
to be zero regardless of the cohort shares. With homogeneous treatment
effect, cohorts 1 and 2 experience the same treatment effect at relative
time 0 so that $CATT_{1,0}$ and $CATT_{2,0}$ cancel. But even with
homogeneous treatment effect, the last term reflects the role of excluded
periods as $CATT_{1,-1}$, $CATT_{2,-1}$ and $CATT_{1,1}$ receive
non-zero weights. If there is any lagged effect and $CATT_{1,1}$
is non-zero, the coefficient $\mu_{-2}$ would be non-zero even without
any anticipation effect. Note such behavior is independent of the
distribution of the two cohorts.

We can further introduce never treated units to our example to demonstrate
how the weight $\omega_{e,\ell'}^{-2}$ can be non-linear in the distribution
of cohorts while maintaining the structure described in Proposition
\ref{prop: dyno FE coeff weights-1}. In Figure \ref{fig:Weights-on-mu_m2}
we plot the weights $\omega_{e,\ell'}^{-2}$ as we vary the distribution
of cohorts. Specifically, we vary the total share of treated cohorts
(shown on the $x$-axis), holding the shares of cohort 1 and 2 equal
to each other and setting the remaining to be the share of never treated
units. For any distribution of cohorts, we have $\omega_{2,-2}^{-2}=1$
(not pictured). Panel (a) shows the weights for the included period
$\ell'=0$ while panel (b) shows the weights for the excluded periods
$\ell'=-1$ or 1.

The example shown in Figure \ref{fig:Weights-on-mu_m2} provides a
visualization of three takeaways regarding the contamination in the
pre-trend coefficient $\mu_{-2}$. First, both panels show that weights
are a non-linear function of cohort shares. Second, panel (a) confirms
the structure for weights associated with $\ell'\neq\ell$ but $\ell'\in g^{incl}$
as described in Proposition \ref{prop: dyno FE coeff weights-1},
namely $\sum_{e}\omega_{e,\ell'}^{-2}=0$ for $\ell'\neq-2$. However,
these weights have non-zero magnitude. When the effect is homogenous
across cohorts, the contaminations are equal to $\sum_{e}\omega_{e,\ell'}^{-2}ATT_{\ell'}$
and cancel each other out. In contrast, when effects are heterogeneous
the different $CATT_{e,\ell'}$ will not necessarily cancel and will
contaminate the estimate for $\mu_{-2}$. Third, panel (b) confirms
the structure for weights associated with excluded periods as described
in Proposition \ref{prop: dyno FE coeff weights-1}, namely $\sum_{e}\omega_{e,\ell'\in g^{excl}}^{-2}=-1$.
In other words, contaminations from excluded periods can be thought
of as a type of \textquotedblleft normalization\textquotedblright :
a weighted average of excluded $CATT_{e,\ell'}$ is subtracted off
the estimated treatment effect. However because these weights are
not contained in $[0,-1]$, this average may lie outside of the convex
hull of $CATT_{e,\ell'}$ for excluded periods. The latter issue can
be alleviated by an assumption that $CATT_{e,\ell'}$ are the same
in all excluded periods; we can fully avoid contamination from excluded
period treatment effects by assuming all associated $CATT_{e,\ell'}$
are equal to zero.

\section{Alternative estimation method \label{sec:Alternative-method}}

We propose a new estimation method that is robust to treatment effects
heterogeneity. The goal of our method is to estimate a weighted average
of $CATT_{e,\ell}$ for $\ell\in g$ with reasonable weights, namely
weights that sum to one and are non-negative. In particular, we focus
on the following weighted average of $CATT_{e,\ell}$, where the weights
are shares of cohorts that experience at least $\ell$ periods relative
to treatment, normalized by the size of $g$:
\begin{equation}
\nu_{g}=\frac{1}{\left|g\right|}\sum_{\ell\in g}\sum_{e}CATT_{e,\ell}Pr\{E_{i}=e\mid E_{i}\in[-\ell,T-\ell]\}.
\end{equation}
One can aggregate $CATT_{e,\ell}$ to form other parameters of interest,
such as those proposed by \citet{callaway_santanna_ssrn2018}. We
focus on the above aggregation $\nu_{g}$ since our goal is to improve
the non-convex and non-zero weighting in $\mu_{g}$. The weights in
$\nu_{g}$ are guaranteed to be convex and have an interpretation
as the representative shares corresponding to each $CATT_{e,\ell}$.
Thus, our alternative estimator $\widehat{\nu}_{g}$ improves upon
the two-way fixed effects estimator $\widehat{\mu}_{g}$ by estimating
an interpretable weighted average of $CATT_{e,\ell\in g}$. 

Our method proceeds by replacing each component in $\nu_{g}$ with
its consistent estimator. We first estimate each $CATT_{e,\ell}$
using an interacted two-way fixed effects regression, then estimate
the weight $Pr\{E_{i}=e\mid E_{i}\in[-\ell,T-\ell]\}$ using their
sample analogs. In the final step, we average over the cohort-specific
estimates associated with relative period $\ell$. This method has
a similar flavor as the method proposed by \citet{gibbons_serrato_urbancic_jem2018}.
They first use an interacted model to estimate the treatment effect
for each fixed effect group; the resulting group-specific estimates
are averaged to provide the ATE. Their method improves fixed effects
regressions in a cross-sectional setting, and our method builds on
theirs by improving two-way fixed effects regressions in a panel setting.
We therefore follow their terminology in calling our alternative estimator
an ``interaction-weighted'' estimator. 

\subsection{Interaction-weighted estimator\label{subsec:Interaction-weighted-estimator}}

We describe the estimation procedure in three steps (with more detailed
definitions stated in Definition \ref{def:The-IW-estimator} of Online
Appendix \ref{sec:Proofs}). 

\textbf{Step 1. }We estimate $CATT_{e,\ell}$ using a linear two-way
fixed effects specification that interacts relative period indicators
with cohort indicators, excluding indicators for cohorts from some
set $C$: 
\begin{align}
Y_{i,t} & =\alpha_{i}+\lambda_{t}+\sum_{e\not\in C}\sum_{\ell\neq-1}\delta_{e,\ell}(\mathbf{1}\{E_{i}=e\}\cdot D_{i,t}^{\ell})+\epsilon_{i,t}.\label{eq: saturated event studies model}
\end{align}
The exact specification depends on the cohort shares for a given application.
If there is a never-treated cohort, i.e. $\infty\in supp\{E_{i}\}$,
then we may set $C=\{\infty\}$ and estimate regression (\ref{eq: saturated event studies model})
on all observations. If there are no never-treated units, i.e. $\infty\not\in supp\{E_{i}\}$,
then we may set $C=\{\max\{E_{i}\}\}$, i.e. the latest-treated cohort
and estimate regression (\ref{eq: saturated event studies model})
on observations from $t=0,\dots,\max\{E_{i}\}-1$. Lastly, if there
is a cohort that is always treated, i.e. $0\in supp\{E_{i}\}$, then
we need to exclude this cohort from estimation. 

The coefficient estimator $\widehat{\delta}_{e,\ell}$ from regression
(\ref{eq: saturated event studies model}) is a DID estimator for
$CATT_{e,\ell}$ with particular choices of pre-periods and control
cohorts. As DID is likely a familiar estimator for applied researchers,
we separate the more in-depth discussion of its definition, choices
of pre-periods, and choices of control cohorts in Section \ref{subsec:Difference-in-differences-estima}. 

\textbf{Step 2. }We estimate the weights $Pr\{E_{i}=e\mid E_{i}\in[-\ell,T-\ell]\}$
by sample shares of each cohort in the relevant period(s) $\ell\in g$. 

\textbf{Step 3.} To form our IW estimator, we take a weighted average
of estimates for $CATT_{e,\ell}$ from Step 1 with weight estimates
from step 2. More formally, the IW estimator is
\begin{equation}
\widehat{\nu}_{g}=\frac{1}{\left|g\right|}\sum_{\ell\in g}\sum_{e}\widehat{\delta}_{e,\ell}\widehat{Pr}\{E_{i}=e\mid E_{i}\in[-\ell,T-\ell]\}
\end{equation}
where $\widehat{\delta}_{e,\ell}$ is returned from step 1 and $\widehat{Pr}\{E_{i}=e\mid E_{i}\in[-\ell,T-\ell]\}$
is the estimated weight returned from step 2. We normalize the weights
further by the size of $g$. If $g$ is a singleton, then its size
is $\left|g\right|=1$.

\textbf{Validity of the IW estimator. }Under the parallel trends and
no anticipation assumptions the coefficient estimator $\widehat{\delta}_{e,\ell}$
from regression (\ref{eq: saturated event studies model}) is a consistent
estimator for $CATT_{e,\ell}$. The sample shares of each cohort are
also consistent estimators for the population shares. Thus, the IW
estimator is consistent for a weighted average of $CATT_{e,\ell}$
with weights equal to the share of each cohort in the relevant period(s). 

With a few standard assumptions (which we present as Assumption \ref{assu:(The-saturated-regression}
in Online Appendix \ref{sec:Proofs}) on regression (\ref{eq: saturated event studies model}),
we can show that each IW estimator is asymptotically normal and derive
its asymptotic variance. The large sample approximation allows us
to estimate the variance of IW estimators directly without relying
on bootstrapping as in \citet{callaway_santanna_ssrn2018}. However,
we only construct pointwise confidence interval valid for a given
IW estimator $\widehat{\nu}_{g}$. The bootstrap-based inference by
\citet{callaway_santanna_ssrn2018} constructs simultaneous confidence
intervals that are valid for the entire path of $\widehat{\nu}_{g}$.

\subsection{Difference-in-differences estimator for $CATT_{e,\ell}$\label{subsec:Difference-in-differences-estima}}
\begin{defn}
Assume cohort $e$ is non-empty i.e. $\sum_{i=1}^{N}\text{\ensuremath{\mathbf{1}}}\{E_{i}=e\}>0$.
Assume there exists some pre-period $s<e$ and some set of control
cohorts $C\subseteq\left\{ c:e+\ell<c\leq T\right\} $ that are non-empty
i.e. $\sum_{i=1}^{N}\text{\ensuremath{\mathbf{1}}}\{E_{i}\in C\}>0$.
Using the notion $\mathbb{E}_{N}$ to abbreviate the symbol $\frac{1}{N}\sum_{i=1}^{N}$,
the DID estimator with pre-period $s$ and control cohorts $C$ estimates
$CATT_{e,\ell}$ as 
\begin{align}
\hat{\delta}_{e,\ell}=\frac{\mathbb{E}_{N}[\left(Y_{i,e+\ell}-Y_{i,s}\right)\cdot\text{\ensuremath{\mathbf{1}}}\left\{ E_{i}=e\right\} ]}{\mathbb{E}{}_{N}[\text{\ensuremath{\mathbf{1}}}\left\{ E_{i}=e\right\} ]}-\frac{\mathbb{E}_{N}[\left(Y_{i,e+\ell}-Y_{i,s}\right)\cdot\text{\ensuremath{\mathbf{1}}}\left\{ E_{i}\in C\right\} ]}{\mathbb{E}_{N}[\text{\ensuremath{\mathbf{1}}}\left\{ E_{i}\in C\right\} ]}.\label{eq:DID-estimator}
\end{align}

The assumption of non-empty cohort $e$, existence of pre-period and
non-empty control cohorts makes the DID estimator well-defined. For
example, DID estimators for cohort 0 are not well-defined because
a pre-period does not exist for this cohort, which is why we exclude
them from estimating regression (\ref{eq: saturated event studies model})
in Step 1.

Using the above definition, in regression (\ref{eq: saturated event studies model})
from Step 1 of our proposed method, the coefficient estimator $\widehat{\delta}_{e,\ell}$
is a DID estimator for $CATT_{e,\ell}$ with pre-period $s=e-1$ (because
we exclude relative period $\ell=-1$) and some choice of control
cohorts $C$. If there is a never-treated cohort i.e. $\infty\in supp\{E_{i}\}$,
then we set the control cohort to be never-treated units $C=\{\infty\}$.
If there are no never-treated units, i.e. $\infty\not\in supp\{E_{i}\}$,
then we set the control cohort $C=\{\max\{E_{i}\}\}$, i.e. the latest-treated
cohort. Among all possible pre-periods, for regression (\ref{eq: saturated event studies model})
from step 1 of our proposed method we choose $s=e-1$ and $C=\{\infty\}$
with never-treated units (or $\{\max\{E_{i}\}\}$ without never-treated
units) because the resulting specification is a natural extension
of the common specifications of two-way fixed effects regression. 

Note that without never-treated units we need to drop time periods
$t\geq\max\{E_{i}\}$ from estimating regression (\ref{eq: saturated event studies model})
because every unit will be treated in these periods. The DID estimators
for $CATT_{e,\ell}$ for $e+\ell\geq\max\{E_{i}\}$ are thus not well-defined
as the control cohort is empty $C=\emptyset$. For example, when there
are just two cohorts with $E_{i}\in\{1,T\}$, one treated at $t=1$
and the other treated in the last period, we then need to omit interaction
terms involving the latest-treated cohorts as well as dropping observations
from the last period $t=T$ from estimation.
\end{defn}
Under some assumptions, this DID estimator $\widehat{\delta}_{e,\ell}$
is an unbiased and consistent estimator for $CATT_{e,\ell}$, a fact
that we build on in deriving the probability limit of the IW estimator.
We state this in the following proposition. 
\begin{prop}
\label{prop:The-DID-estimator} If Assumptions \ref{assum:PT-baseline}
and \ref{assum:no-anti} hold, then the DID estimator using any pre-period
$s<e$ and non-empty control cohorts $C$ is an unbiased and consistent
estimator for $CATT_{e,\ell}$.
\end{prop}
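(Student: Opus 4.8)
The plan is to reduce each of the two ratios in (\ref{eq:DID-estimator}) to an ordinary subsample average by conditioning on the realized event times, decompose the outcome difference $Y_{i,e+l}-Y_{i,s}$ into a baseline-trend piece and a treatment-effect piece, and then cancel the trend pieces across the two groups using parallel trends. First I would apply the observed-outcome identity (\ref{eq:dynamic observed outcome}) together with Assumption \ref{assum:no-anti} to rewrite the differences group by group. For a unit in cohort $e$, the period-$(e+l)$ outcome equals $Y_{i,e+l}^{e}$, while the period-$s$ outcome with $s<e$ equals $Y_{i,s}^{\infty}$ by no anticipation; hence
\[
Y_{i,e+l}-Y_{i,s}=\left(Y_{i,e+l}^{e}-Y_{i,e+l}^{\infty}\right)+\left(Y_{i,e+l}^{\infty}-Y_{i,s}^{\infty}\right),
\]
whose conditional mean given $E_{i}=e$ is $CATT_{e,l}+E\left[Y_{i,e+l}^{\infty}-Y_{i,s}^{\infty}\mid E_{i}=e\right]$ by the definition of $CATT_{e,l}$. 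For a control unit with $E_{i}=c\in C$, both $s$ and $e+l$ lie strictly before $c$ (since $c>e+l\geq e>s$), so no anticipation gives $Y_{i,e+l}=Y_{i,e+l}^{\infty}$ and $Y_{i,s}=Y_{i,s}^{\infty}$, and the conditional mean of $Y_{i,e+l}-Y_{i,s}$ given $E_{i}\in C$ is the pure trend $E\left[Y_{i,e+l}^{\infty}-Y_{i,s}^{\infty}\mid E_{i}\in C\right]$. Note this handles all $l$ uniformly, with no case split on its sign.

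Next I would invoke Assumption \ref{assum:PT-baseline}: the baseline growth $E\left[Y_{i,e+l}^{\infty}-Y_{i,s}^{\infty}\mid E_{i}=e'\right]$ does not depend on the conditioning cohort $e'$ and equals the unconditional $E\left[Y_{i,e+l}^{\infty}-Y_{i,s}^{\infty}\right]$; since the conditional mean over the set $C$ is a convex combination of cohort-wise means, it equals that same common value. Subtracting the two conditional means therefore cancels the trend terms exactly and leaves $CATT_{e,l}$. This establishes that the population object targeted by $\widehat{g}^{l}-\widehat{g}^{\infty}$ equals $CATT_{e,l}$.

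Finally I would translate this into the finite-sample and large-sample claims. Conditioning on the event-time vector $\left(E_{1},\dots,E_{N}\right)$ fixes the two subsamples, so each ratio is a genuine average over a fixed index set and the computation above gives $E\left[\widehat{\delta}_{e,l}\mid E_{1},\dots,E_{N}\right]=CATT_{e,l}$ on the event that both cohort subsamples are nonempty; iterated expectations then yields unbiasedness. For consistency, the law of large numbers sends each numerator and denominator in (\ref{eq:DID-estimator}) to its expectation, and the continuous mapping theorem (the denominators have positive limits $Pr\{E_{i}=e\}$ and $Pr\{E_{i}\in C\}$) gives $\widehat{g}^{l}\rightarrow_{p}E\left[Y_{i,e+l}-Y_{i,s}\mid E_{i}=e\right]$ and $\widehat{g}^{\infty}\rightarrow_{p}E\left[Y_{i,e+l}-Y_{i,s}\mid E_{i}\in C\right]$, whence $\widehat{\delta}_{e,l}\rightarrow_{p}CATT_{e,l}$ by the cancellation above. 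The one delicate point is the ratio structure: $\widehat{g}^{l}$ is not a plain sample mean, so $E[A/B]\neq E[A]/E[B]$ in general, and the conditioning-on-event-times argument is exactly what reduces it to a subsample mean and delivers \emph{exact} unbiasedness rather than only asymptotic unbiasedness. The remainder is bookkeeping about which observations are pre- versus post-treatment.
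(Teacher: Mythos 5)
Your proposal is correct and takes essentially the same route as the paper's own proof: use no anticipation to replace pre-treatment and control-group outcomes with baseline potential outcomes, add and subtract $Y_{i,e+l}^{\infty}$ to split the treated-cohort mean into $CATT_{e,l}$ plus a baseline trend, and invoke parallel trends to make the trend terms cancel between the two groups. The only difference is that you flesh out the statistical step (conditioning on the realized event times to get exact unbiasedness of the subsample ratios, then LLN and continuous mapping for consistency), which the paper compresses into a one-line assertion that the estimator is unbiased and consistent for the difference of conditional means.
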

It is possible to relax the parallel trends assumption to allow the
timing of treatment to depend on covariates. One can estimate $CATT_{e,\ell}$
consistently based on the inverse propensity score reweighted estimator
proposed by \citet{semiparametric_did} and \citet{callaway_santanna_ssrn2018},
the outcome regression approach by \citet{heckman_ichimura_todd_restud1997},
and the doubly robust estimator recently proposed by \citet{santanna_zhao_dr_did2018}.
The resulting estimates $\hat{\delta}_{e,l}$ can then be plugged
into step 3 to form our IW estimator. In particular, without covariates
and for the case with never treated units, our approach coincides
with \citet{callaway_santanna_ssrn2018}\emph{.} Therefore one can
use the \texttt{did} R package developed by \citet{callaway_R} to
form the IW estimator.

The choice of the pre-period $s$ and the control cohorts $C$ depends
on the trade-off between relaxing Assumption \ref{assum:PT-baseline}
or Assumption \ref{assum:no-anti}. If Assumption \ref{assum:no-anti}
is likely to hold, we can choose any pre-period $s<e$ and include
not-yet-treated units in control cohorts $C=\{c:c>e+\ell\}$ as in
\citet{callaway_santanna_ssrn2018}. This choice allows us to relax
the parallel trends assumption to be just $E[Y_{i,e+\ell}^{\infty}-Y_{i,0}^{\infty}\vert E_{i}=e]=E[Y_{i,e+\ell}^{\infty}-Y_{i,0}^{\infty}\vert E_{i}>e+\ell]$.
If instead we want to relax Assumption \ref{assum:no-anti} to allow
some anticipation, we then need the parallel trends assumption to
hold. For example, suppose we are only willing to assume away anticipation
2 periods before a unit is treated, we may still be able to recover
several $CATT_{e,\ell}$ by appropriately selecting a pre-period and
a smaller set of control cohorts. We may choose any $s<e-2$ for the
pre-period, and for control cohorts, we may choose cohorts treated
only after at least 2 periods from now $C\subseteq\left\{ c:e+\ell+2<c\leq T\right\} $,
or choose the latest-treated cohort $C=\{\max\{E_{i}\}\}$ as specified
in regression (\ref{eq: saturated event studies model}).

\section{Empirical Illustration\label{sec:Applications}}

We illustrate our findings in the setting of \citet{dobkin_finkelstein_kluender_notowidigdo_aer2018}.
\citet{dobkin_finkelstein_kluender_notowidigdo_aer2018} study the
economic consequences of hospitalization, which is a large source
of economic risk for adults in the United States. To quantify these
economic risks, in the first part of their analysis, \citet{dobkin_finkelstein_kluender_notowidigdo_aer2018}
leverage variation in the timing of hospitalization observed in the
publicly-available dataset, Health and Retirement Study (HRS), which
we describe in more detail in Section \ref{subsec:Setting-and-data}.
Their estimation of the dynamic effects of hospitalization using two-way
fixed effects regressions provides a good context for demonstrating
our results. As we argue below in Section \ref{subsec:Assumptions},
parallel trends and no anticipation assumptions are plausible in this
setting. However, the effect of hospitalization is potentially heterogenous
across individuals hospitalized in different years. Our findings of
non-convex and non-zero weighting would therefore apply to their two-way
fixed effects regression estimates, and our alternative estimator
could lead to different estimates. Furthermore, this dataset is publicly
available, which allows us to provide replication files.

\subsection{Data\label{subsec:Setting-and-data}}

Our sample selection closely follows \citet{dobkin_finkelstein_kluender_notowidigdo_aer2018}
but we include a cursory explanation here for completeness with an
emphasis on how our final sample differs from their main analysis
sample. Our primary source of data is the biennial Health and Retirement
Study (HRS). We identify the sample of individuals who appear in two
sequential waves of surveys and newly report having a hospital admission
over the last two years (the ``index'' or initial admission) at
the second survey. To focus on health ``shocks'', we restrict attention
to non-pregnancy-related hospital admissions as in \citet{dobkin_finkelstein_kluender_notowidigdo_aer2018}.
We also follow \citet{dobkin_finkelstein_kluender_notowidigdo_aer2018}
by focusing on adults who are hospitalized at ages 50-59. 

Unlike \citet{dobkin_finkelstein_kluender_notowidigdo_aer2018}, we
restrict our analysis to a subsample of these individuals who appear
throughout waves 7-11 (roughly 2004-2012). Our sample of analysis
therefore includes HRS respondents with index hospitalization during
waves 8-11. The purpose of this sample restriction is to maintain
a balanced panel with a reasonable sample size. 

Here $i$ indexes an individual, and $t$ indexes survey wave ($T=4$)
and is normalized to zero for wave 7, the first wave in our sample.
Among the outcomes $Y_{i,t}$ studied by \citet{dobkin_finkelstein_kluender_notowidigdo_aer2018},
we focus on two: out-of-pocket medical spending and labor earnings.
They are derived from self-reports, adjusted to 2005 dollars and censored
at the 99.95th percentile.

\textbf{Summary statistics. }Table \ref{tab:Sample-characteristics}
presents basic summary statistics for our analysis sample before hospitalization.
We have a slightly lower fraction of white in our sample, but otherwise
have a similar sample to \citet{dobkin_finkelstein_kluender_notowidigdo_aer2018}. 

In Panel D, we compare means of the cross-sectional distributions
of outcomes for individuals who have not been hospitalized by each
wave. The size of the sample conditional on not having been hospitalized
strictly decreases with each subsequent wave. There are apparent time
trends in our outcomes of interest prior to hospitalization as we
observe distributional changes across waves. Out-of-pocket medical
spendings fluctuate and earnings decrease with each wave on average
as more individuals are retired in each subsequent wave.

\subsection{Setting\label{subsec:Assumptions}}

We illustrate how variations in the timing of hospitalization fit
the event studies design proposed in Section \ref{sec:Event-studies-in-a-potential}.
We define treatment $D_{i,t}$ to be ever having been hospitalized.
In our terminology, we categorize individuals into cohorts based on
$E_{i}$, which is defined as the survey wave of their initial hospitalization.
Since we restrict the sample to individuals who were ever hospitalized
in waves 8-11, there are four cohorts $E_{i}\in\left\{ 1,2,3,4\right\} $.
Although hospitalization itself may not be an absorbing state, we
are trying to model the impact of having had any hospitalization.
Thus, the cohort-specific average treatment effects $CATT_{e,\ell}$
trace out the path of treatment effects for cohort $e$ following
a negative health shock (even though the shock itself may be transient),
as opposed to never having been hospitalized. Next we discuss whether
each of the three identifying assumptions proposed is likely to hold
in the context of unexpected hospitalizations.

\textbf{Parallel trends (Assumption \ref{assum:PT-baseline}). }Hospitalization
is likely to be earlier among sicker individuals with high out-of-pocket
medical spending and low labor earnings. Thus, it is not plausible
that the baseline outcome $Y_{i,t}^{\infty}$ is mean independent
of the timing of hospitalization. The parallel trends assumption is
more plausible as it allows the timing to depend on unobserved time-invariant
characteristics such as chronic disease. Furthermore, hospitalized
individuals might be on a downward trend for labor earnings already
prior to hospitalization compared to individuals who are never hospitalized.
To reduce such confounding, we follow \citet{dobkin_finkelstein_kluender_notowidigdo_aer2018}
to restrict the parallel trends assumption to individuals who were
ever hospitalized. 

\textbf{No anticipatory behavior (Assumption \ref{assum:no-anti}).
}It is plausible that there is no anticipatory behavior prior to the
hospitalization, given that the treatment is restricted to conditions
that are likely unexpected hospitalizations. This assumption may be
violated if individuals have private information about the probability
of these hospitalizations over time and thus adjust their behavior
prior to hospitalization.

\textbf{Treatment effect heterogeneity (Assumptions \ref{assum:homogeneous-TE}).
}For out-of-pocket medical spending, the effect of hospitalization
is potentially heterogenous across individuals hospitalized in different
waves. Individuals hospitalized in later waves are mechanically older
at the time of hospitalization than individuals hospitalized in earlier
waves. The effect on out-of-pocket medical spending is largely determined
by generosity of health insurance, which may decrease as individuals
age into Medicare. The effect on labor earnings is also likely heterogenous
as it depends on the labor market condition at the time of hospitalization:
for example, individuals hospitalized during the financial crisis
may find it more difficult to return to the labor force, and suffer
a more grave decrease in earnings. 

\subsection{Illustrating weights in two-way fixed effects regression\label{subsec:Illustrating-weighting-in}}

We illustrate our results on two-way fixed effects regression by estimating
the following specification with indicators for up to three leads
and lags following equation (3) of \citet{dobkin_finkelstein_kluender_notowidigdo_aer2018}
\begin{equation}
Y_{i,t}=\alpha_{i}+\lambda_{t}+\mu_{-3}D_{i,t}^{-3}+\mu_{-2}D_{i,t}^{-2}+\mu_{0}D_{i,t}^{0}+\mu_{1}D_{i,t}^{1}+\mu_{2}D_{i,t}^{2}+\mu_{3}D_{i,t}^{3}+\upsilon_{i,t}.\label{eq: HRS-dyno}
\end{equation}
For their estimation, \citet{dobkin_finkelstein_kluender_notowidigdo_aer2018}
trim their sample, keeping only observations up to three waves prior
to the hospitalization and three waves after the hospitalization,
and weight their regression with survey weights. To fully illustrate
issues with this specification, we do not trim, but rather use a sample
balanced in calendar time for $t\in\{0,\dots,4\}$, and do not apply
survey weights. With a sample balanced in calendar time, we need to
exclude at least two relative period indicators due to multicollinearity.
Following \citet{dobkin_finkelstein_kluender_notowidigdo_aer2018},
we exclude the period right before hospitalization ($\ell=-1$). We
also exclude $\ell=-4$. Since we do not trim, note that our results
are not directly comparable to \citet{dobkin_finkelstein_kluender_notowidigdo_aer2018}
even though they are quite similar.

We focus on a single coefficient $\mu_{-2}$ that is supposed to test
for any pre-trend of hospitalizations. As in Proposition \ref{prop: dyno FE coeff weights},
we can decompose $\mu_{-2}$ as
\begin{equation}
\sum_{e=1}^{4}\omega_{e,0}^{-2}CATT_{e,0}+\sum_{\ell=-3,\neq-1}^{3}\sum_{e=1}^{4}\omega_{e,\ell}^{-2}CATT_{e,\ell}+\sum_{\ell'=-4,-1}\sum_{e=1}^{4}\omega_{e,\ell'}^{-2}CATT_{e,\ell'}.\label{eq:HRS-dyno-limit}
\end{equation}
As discussed in Proposition \ref{prop: dyno FE coeff weights-1} we
can estimate the underlying weights $\omega_{e,\ell}^{-2}$ by regressing
$\mathbf{1}\left\{ E_{i}=e\right\} \cdot D_{i,t}^{\ell}$ on the relative
wave indicators included in specification (\ref{eq: HRS-dyno}) i.e.
$\{D_{i,t}^{\ell}\}_{\ell=-3,\neq-1}^{3}$ and two-way fixed effects.
The coefficient estimator of $D_{i,t}^{-2}$ in such regression, $\widehat{\omega}_{e,\ell}^{-2}$,
consistently estimates $\omega_{e,\ell}^{-2}$. 

Figure \ref{fig:Weights} plots these estimated weights. As described
in our decomposition results, these weights have the following properties:
(a) the four weights from relative wave $\ell=-2$ sum to one; (b)
the weights for other included relative waves $\ell\in\{-3,0,1,2,3\}$
sum to zero for each included relative wave; and (c) the weights from
excluded relative waves $\ell\in\{-4,-1\}$ sum to negative one across
these excluded relative waves.

The weights are non-negative for lags of treatments, which suggest
that the FE estimate $\widehat{\mu}_{-2}$ is particularly sensitive
to estimates of the dynamic effects of hospitalizations and does not
isolate the pre-trends. Specifically, treatment effects heterogeneity
in $\ell=-3,0,1,2$ can affect the FE estimate $\widehat{\mu}_{-2}$.
Applied researchers can make similar plots to visualize the role of
weights in their settings with our publicly-available Stata package
{\footnotesize{}\texttt{eventstudyweights}} by \citet{sun_stata}.

\subsection{Comparing FE and IW estimates}

We illustrate our alternative method (IW estimator) following the
three steps outline in Section \ref{subsec:Interaction-weighted-estimator}.
First, we estimate the interacted specification (\ref{eq: saturated event studies model})
as
\begin{equation}
Y_{i,t}=\alpha_{i}+\lambda_{t}+\sum_{e\in\left\{ 1,2,3\right\} }\sum_{\ell=-3,\neq-1}^{2}\delta_{e,\ell}\mathbf{1}\{E_{i}=e\}\cdot D_{i,t}^{\ell}+\epsilon_{i,t}\label{eq: HRS-saturated model}
\end{equation}
for $t=0,\dots,3$. Specifically, we estimate $CATT_{e,\ell}$ using
a DID estimator $\widehat{\delta}_{e,\ell}$ with pre-period $s=-1$
and control cohort $C=4$, the cohort hospitalized in the last period.
This means we need to drop $t=4$ from estimation because everyone
has been hospitalized by $t=4$, and a control cohort for estimating
$CATT_{e,\ell}$ in $t=4$ does not exist. Second, we estimate the
sample share of each cohort $e$ across cohorts that experience at
least $l$ periods relative to hospitalization by its sample analog.
Third, we form IW estimates $\widehat{\nu}_{\ell}$ by taking weighted
averages of $\widehat{\delta}_{e,\ell}$ (returned from step one)
with sample cohort share (returned from step two) as weights.

In Table \ref{tab:Estimates-for-HRS}, we report the FE estimates
$\widehat{\mu}_{\ell}$ and the IW estimates $\widehat{\nu}_{\ell}$,
as well as the underlying $CATT_{e,\ell}$ estimates $\widehat{\delta}_{e,\ell}$.
In this application, the FE estimates and IW estimates happen to be
very similar in their magnitude. The conclusion from \citet{dobkin_finkelstein_kluender_notowidigdo_aer2018}
based on FE estimates similar to ours still holds: we find a substantial
and persistent decline in the earnings due to hospitalization, and
the increase in out-of-pocket spending is transitory and small in
comparison. 

However, the FE estimates still suffer from non-convex and non-zero
weighting as we show in Section \ref{subsec:Illustrating-weighting-in},
which can lead to interpretability issues depending on the amount
of treatment effect heterogeneity. Even though for both outcomes $\widehat{\mu}_{0}$
falls in the convex hull of its underlying $CATT_{e,0}$ estimates,
$\widehat{\mu}_{-2}$ turns out to be outside the convex hull of its
underlying $CATT_{e,-2}$ estimates. In contrast, by construction,
the IW estimates $\widehat{\nu}_{\ell}$ fall within the convex hull
of its underlying $CATT_{e,\ell}$ estimates and are unaffected by
$CATT_{e,\ell'}$ estimates from other periods $\ell'\neq\ell$. Thus,
they have an interpretation as an average effect of the treatment
on the treated $\ell$ periods after initial treatment. 

\section{Conclusions\label{sec:Conclusions}}

This paper analyzes the behavior of relative period coefficients $\mu_{\ell}$
on the indicator for being $\ell$ periods away from the treatment
from two-way fixed effects regressions in settings with variation
in treatment timing and treatment effects heterogeneity. For dynamic
treatment effects, researchers are usually interested in estimating
some average of treatment effects from $\ell$ periods relative to
the treatment, and it is common to report the coefficient estimate
$\widehat{\mu}_{\ell}$ assuming that interpretation is valid. However,
we show that in the presence of heterogenous treatment effects, the
coefficient $\mu_{\ell}$ does not necessarily capture the dynamic
treatment effect as it can fall outside the convex hull of $CATT_{e,\ell}$
from its corresponding period $\ell$ and may pick up spurious terms
consisting of treatment effects from periods other than $\ell$. This
negative result is based on the decomposition of the coefficients
$\mu_{\ell}$ as a linear combination of cohort-specific average treatment
effects on the treated $CATT_{e,\ell}$. The weights in the linear
combination can be non-convex, and non-zero on relative periods $\ell'\neq\ell$. 

Given these negative results on two-way fixed effects regression estimators,
we propose ``interaction-weighted'' (IW) estimators for estimating
dynamic treatment effects. The IW estimators are formed by first estimating
$CATT_{e,\ell}$ with a regression saturated in cohort and relative
period indicators, and then averaging estimates of $CATT_{e,\ell}$
across $e$ at a given $\ell$. These $CATT_{e,\ell}$ are identified
under parallel trends and no anticipation assumptions. These estimators
are easy to implement and robust to heterogenous treatment effects
across cohorts; the IW estimator associated with relative period $\ell$
is guaranteed to estimate a convex average of $CATT_{e,\ell}$ using
weights that are sample share of each cohort $e$.

Finally, we illustrate the empirical relevance of our results by estimating
the dynamic effect of hospitalization on the out-of-pocket medical
spending and labor earnings using a setup similar to \citet{dobkin_finkelstein_kluender_notowidigdo_aer2018}.
We find non-convex and non-zero weights for two-way fixed effects
regression in this example, and show that the resulting estimates
indeed sometimes fall outside the convex hull of the underlying \emph{CATT
}estimates due to contamination by treatment effects from other relative
periods. IW estimates, on the other hand, are weighted averages of
the underlying\emph{ CATT }estimates with weights representative of
cohort share.

More broadly, our paper suggests that researchers can do more in this
context to assess the validity of underlying assumptions and how violations
may impact their estimates. We demonstrate the sensitivity of two-way
fixed effects regressions to underlying assumptions and provide researchers
with tools to assess and address these issues. Specifically, for average
treatment effects estimated using two-way fixed effects regressions
we recommend that empirical researchers directly estimate the underlying
weights on cohort-specific average treatment effects using our proposed
auxiliary regressions. This exercise allows empirical researchers
to assess the degree of potential contamination under any assumed
structure of treatment effects heterogeneity. We also recommend empirical
researchers consider estimation methods more robust to treatment effects
heterogeneity such as the IW estimator we propose. Developing additional
tools for applied researchers to identify and estimate interpretable
parameters of interest is a promising avenue for future research.

\newpage{}

\bibliographystyle{aer}
\bibliography{mover}
\newpage{}

\begin{figure}[H]
\begin{centering}
\caption{\label{fig:Weights-on-mu_m2}Weight on $CATT_{e,\ell'}$ for $\ell'\protect\neq-2$
in the Coefficient $\mu_{-2}$ from Regression (\ref{eq:weights ex rel times-1})}
\subfloat[Weight on $CATT_{e,0}$ ]{
\centering{}\includegraphics[scale=0.8]{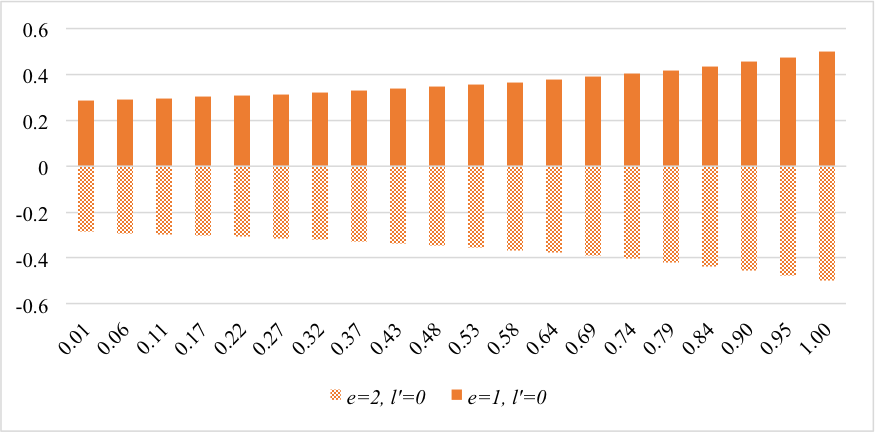}}\\
\subfloat[Weight on $CATT_{e,\ell'}$ for $\ell'\in\{-1,1\}$]{\centering{}\includegraphics[scale=0.8]{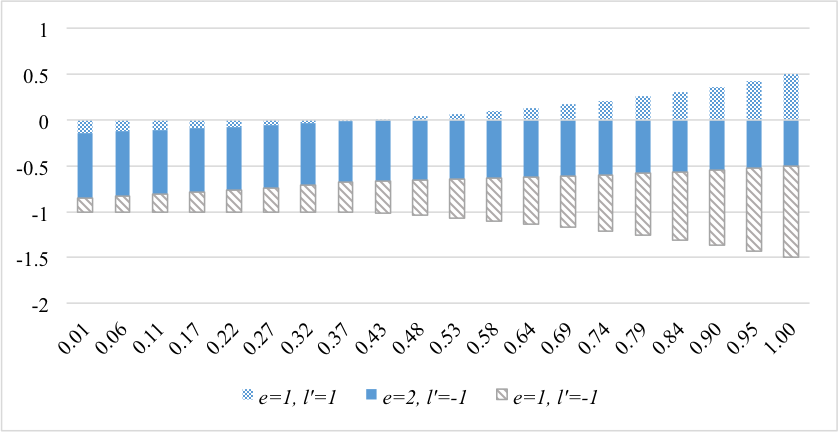}}
\par\end{centering}
\emph{Notes:} Panel (a) plots the weight on $CATT_{e,0}$ and panel
(b) plots the weight on $CATT_{e,\ell'}$ for $\ell'\in\{-1,1\}$
under a given distribution of cohorts indexed by the $x$-axis. Specifically,
we construct a panel balanced in calendar time with $T=2$ and cohorts
$E_{i}\in\{1,2,\infty\}$. We vary the share of never-treated units
$Pr\{E_{i}=\infty\}$ between $[0,0.99]$ and divide the rest of the
units evenly into cohorts treated at time 1 and 2. The $x$-axis indexes
$1-Pr\{E_{i}=\infty\}$.
\end{figure}

\pagebreak{}
\begin{figure}[H]
\begin{centering}
\caption{FE\emph{ }vs IW Estimates of the Effects of Hospitalization on Outcomes\label{fig:FEvsIW}}
\par\end{centering}
\begin{centering}
\subfloat[Out-of-pocket Medical Spending]{
\centering{}\includegraphics{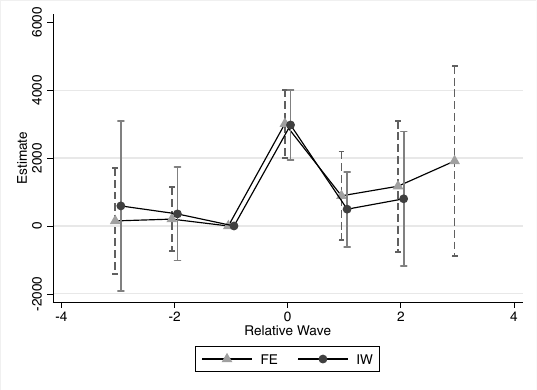}}\\
\subfloat[Labor Earnings]{
\centering{}\includegraphics{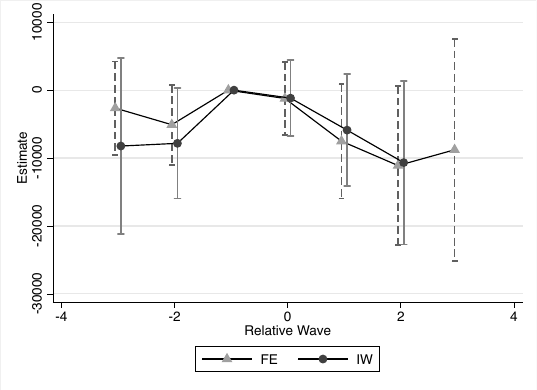}}
\par\end{centering}
\emph{Notes:} Each figure plots FE estimates $\widehat{\mu}_{\ell}$
from regression (\ref{eq: HRS-dyno}) in triangles and IW estimates
$\widehat{\nu}_{\ell}$ from regression (\ref{eq: HRS-saturated model})
in circles against relative wave $\ell$, with their respective pointwise
95\% confidence intervals. Both are estimates for the effect of hospitalization
at relative wave $\ell$. The outcome variable is out-of-pocket medical
spending in panel (a) and labor earnings in panel (b) respectively.
\end{figure}
\pagebreak{}
\begin{figure}[H]
\begin{centering}
\caption{\label{fig:Weights} Estimated Weights $\widehat{\omega}_{e,\ell}^{-2}$
Underlying $\mu_{-2}$ }
\par\end{centering}
\begin{centering}
\includegraphics[scale=1.5]{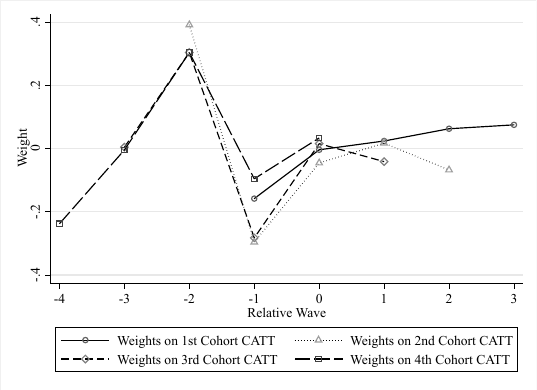}
\par\end{centering}
\emph{Notes: }The FE estimator for the pretrend of hospitalization
$\widehat{\mu}_{-2}$ estimates a linear combination of cohort-specific
effects $CATT_{e,\ell}$'s from all cohorts $e$ and relative waves
$\ell$ as specified in equation (\ref{eq:HRS-dyno-limit}). This
figure plots the estimated weight $\widehat{\omega}_{e,\ell}^{-2}$
associated with each $CATT_{e,\ell}$ in this linear combination. 
\end{figure}
\pagebreak{}\clearpage
\begin{sidewaystable}
\begin{centering}
\caption{\label{tab:Selected-publications}Survey of Applied Papers}
\begin{tabular}{>{\raggedright}p{0.2\textwidth}>{\centering}p{0.1\textwidth}>{\centering}p{0.1\textwidth}>{\centering}p{0.1\textwidth}>{\centering}p{0.1\textwidth}>{\centering}p{0.1\textwidth}>{\centering}p{0.1\textwidth}>{\centering}p{0.1\textwidth}}
\hline 
Paper & Binary Absorbing Treatment & Variation in Treatment Timing & Pre-Treatment Relative Period Excluded & Exclude Distant Relative Periods & Bin Distant Relative Periods & Includes Never Treated Units & Panel Balanced in Relative Time\tabularnewline
\hline 
Bosch and Campos-Vazquez (2014) & X &  &  &  &  &  & \tabularnewline
\hline 
Fitzpatrick and Lovenheim (2014) &  &  &  &  &  &  & \tabularnewline
\hline 
Gallagher (2014) & X & X & -1 &  & X & X & \tabularnewline
\hline 
Tewari (2014) & X & X & 0 &  &  &  & X\tabularnewline
\hline 
Ujhelyi (2014) & X & X & -1 &  & X & X & \tabularnewline
\hline 
Bailey and Goodman-Bacon (2015) & X & X & -1 &  & X & X & X\tabularnewline
\hline 
Deryugina (2017) & X & X & -1 &  & X & X & X\tabularnewline
\hline 
Deschenes et al. (2017) & X &  &  &  &  &  & \tabularnewline
\hline 
He and Wang (2017) & X & X & -1 &  & X &  & \tabularnewline
\hline 
Lafortune et al. (2017) & X & X & 0 & X &  & X & \tabularnewline
\hline 
Kuziemko et al. (2018 & X & X & -1 &  & X &  & \tabularnewline
\hline 
Markevich and Zhuravskaya (2018) &  &  &  &  &  &  & \tabularnewline
\hline 
\end{tabular}
\par\end{centering}
\emph{Notes: }This table consolidates key properties of main event
study specifications across a sample of applied papers. We follow
the selection criteria in \cite{roth_pretest_2019}: the original
sample consists of 70 total papers, but is further constrained to
these twelve papers with publicly available data and code. The data
and code are used to determine exactly the specification estimated
in these papers. We focus on the first specification underlying the
event study estimates in each paper, which we view as a reasonable
proxy for the main specification in the paper. Note that two papers
(Fitzpatrick and Lovenheim (2014); Markevich and Zhuravskaya (2018))
have none of the attributes listed in the columns.
\end{sidewaystable}
\clearpage\pagebreak{}
\begin{table}[H]
\begin{centering}
\caption{\label{tab:Sample-characteristics}Summary Statistics of the HRS Sample}
\par\end{centering}
\begin{centering}
\begin{tabular}{lrrr}
\hline 
 & $N$ & Mean & Std. Dev\tabularnewline
\hline 
\hline 
\emph{Panel A. Demographics} &  &  & \tabularnewline
Age at admission & 656 &  56 &  2.29\tabularnewline
Male & 656 &  0.456 &  0.498\tabularnewline
Year of admission & 656 &  2,007 &  2.11\tabularnewline
 &  &  & \tabularnewline
\emph{Panel B. Race/ethnicity} &  &  & \tabularnewline
Hispanic & 656 & 0.122 & 0.327\tabularnewline
Black & 656 & 0.151 & 0.358\tabularnewline
White & 656 & 0.742 & 0.438\tabularnewline
Other race & 656 & 0.107 & 0.309\tabularnewline
 &  &  & \tabularnewline
\emph{Panel C. Insurance status} &  &  & \tabularnewline
Medicaid & 656 & 0.05 & 0.219\tabularnewline
Private & 656 & 0.715 & 0.452\tabularnewline
Medicare & 656 & 0.072 & 0.259\tabularnewline
 &  &  & \tabularnewline
\emph{Panel D. Pre-hospitalization outcome} &  &  & \tabularnewline
Out-of-pocket medical spending &  &  & \tabularnewline
Wave 7 ($E_{i}\geq1$) & 656 & 3,302 & 9,024\tabularnewline
Wave 8 ($E_{i}\geq2$) & 404 & 2,355 & 8,132\tabularnewline
Wave 9 ($E_{i}\geq3$) &  228 & 2,056 &  3,532\tabularnewline
Wave 10 ($E_{i}=4$) &  65 & 2,044 & 4,379\tabularnewline
 &  &  & \tabularnewline
Earnings &  &  & \tabularnewline
Wave 7 ($E_{i}\geq1$) & 656 & 43,810 & 67,950\tabularnewline
Wave 8 ($E_{i}\geq2$) & 404 & 38,944 & 58,601\tabularnewline
Wave 9 ($E_{i}\geq3$) &  228 & 36,274 & 56,768\tabularnewline
Wave 10 ($E_{i}=4$) &  65 & 29,037 & 46,289\tabularnewline
\hline 
\end{tabular}
\par\end{centering}
\emph{Notes: }This table presents summary statistics on our primary
analysis sample, taken from the biennial Health and Retirement Survey
(HRS). We include the sample of individuals ages 50-59 in waves 7-11
(approximately spanning 2004-2012) who appear in two sequential survey
waves and report a recent hospital admission in the second survey.
For panel D, the sample corresponding to wave $t$ is conditional
on not having hospitalization by wave $t$.
\end{table}

\pagebreak{}

\begin{table}[H]
\caption{\label{tab:Estimates-for-HRS}Estimates for the Effect of Hospitalization
on Outcomes}
 
\begin{centering}
\subfloat[Out-of-pocket Medical Spending]{\centering{}%
\begin{tabular}{c>{\centering}p{1.5cm}>{\centering}p{1.5cm}>{\centering}p{1.5cm}>{\centering}p{1.5cm}>{\centering}p{1.5cm}}
\hline 
 & FE  & IW  & \multicolumn{3}{c}{$CATT_{e,\ell}$}\tabularnewline
$\ell$ Wave Relative to Hospitalization & $\widehat{\mu}_{\ell}$ & $\widehat{\nu}_{\ell}$ & $\widehat{\delta}_{1,\ell}$ & $\widehat{\delta}_{2,\ell}$ & $\widehat{\delta}_{3,\ell}$\tabularnewline
\hline 
-3 &  149  &  591  &  -  &  -  &  591 \tabularnewline
 &  (792)  &  (1273)  &  &  &  (1273) \tabularnewline
-2 &  203  &  353  &  -  &  299  &  411 \tabularnewline
 &  (480)  &  (698)  &  &  (967)  &  (1030) \tabularnewline
-1 & 0 & 0 & 0 & 0 & 0\tabularnewline
 &  &  &  &  & \tabularnewline
0 &  3,013  &  2,960  &  2,826  &  3,031  &  3,092 \tabularnewline
 &  (511)  &  (543)  &  (1038)  &  (704)  &  (998) \tabularnewline
1 &  888  & 530  &  825  &  107  &  - \tabularnewline
 &  (664)  &  (587)  &  (912)  &  (653)  & \tabularnewline
2 &  1,172  &  800  &  800  &  -  &  - \tabularnewline
 &  (983)  &  (1010)  &  (1010)  &  & \tabularnewline
3 &  1,914  &  -  &  -  &  -  &  - \tabularnewline
 &  (1426)  &  &  &  & \tabularnewline
\hline 
\end{tabular}}\\
\subfloat[Labor Earnings]{\begin{centering}
\begin{tabular}{c>{\centering}p{1.5cm}>{\centering}p{1.5cm}>{\centering}p{1.5cm}>{\centering}p{1.5cm}>{\centering}p{1.5cm}}
\hline 
 & FE & IW & \multicolumn{3}{c}{$CATT_{e,\ell}$}\tabularnewline
$\ell$ Wave Relative to Hospitalization & $\widehat{\mu}_{\ell}$ & $\widehat{\nu}_{\ell}$ & $\widehat{\delta}_{1,\ell}$ & $\widehat{\delta}_{2,\ell}$ & $\widehat{\delta}_{3,\ell}$\tabularnewline
\hline 
-3 & -2,642  & -8,228  &  -  &  -  & -8,228 \tabularnewline
 &  (3504)  &  (6592)  &  &  &  (6592) \tabularnewline
-2 & -5,089  & -7,823  &  -  & -7,691  & -7,964 \tabularnewline
 &  (3005)  &  (4197)  &  &  (6348)  &  (4766) \tabularnewline
-1 & 0 & 0 & 0 & 0 & 0\tabularnewline
 &  &  &  &     &   \tabularnewline
0 & -1,225  & -863  &  455  &  3,032  & -7,107 \tabularnewline
 &  (2742)  &  (2967)  &  (5593)  &  (4152)  &  (5883) \tabularnewline
1 & -7,508  & -5,435 & -1,670  & -10,826  &  - \tabularnewline
 &  (4312)  &  (4348)  &  (6500)  &  (4863)  & \tabularnewline
2 & -11,102  & -10,670  & -10,670  &  -  &  - \tabularnewline
 &  (5976)  &  (6154)  &  (6154)  &  & \tabularnewline
3 & -8,780  &  -  &  -  &  -  &  - \tabularnewline
 &  (8332)  &  &  &  & \tabularnewline
\hline 
\end{tabular}
\par\end{centering}
}
\par\end{centering}
\emph{Notes:} This table reports three different sets of estimates
for the dynamic effects of hospitalization on out-of-pocket medical
spending in panel (a) and labor earnings in panel (b). The first column
reports estimates from the FE estimator $\widehat{\mu}_{\ell}$. The
sample includes observations from wave $t=0,1,2,3,4$. Columns 3-5
report the estimates for $CATT_{e,\ell}$ from $\widehat{\delta}_{e,\ell}$.
The sample includes observations from wave $t=0,1,2,3$. Column 2
reports the IW estimates which are constructed as the weighted average
of $\widehat{\delta}_{e,\ell}$'s across cohorts $e$ who are $\ell$
periods from hospitalization. Standard errors (clustered on the individual)
are shown in parentheses.
\end{table}

\appendix
\textbf{Online Appendix}

\setlength{\abovedisplayskip}{0pt} \setlength{\belowdisplayskip}{0pt}

\section{Notation glossary and properties of double-demeaning}

In this section, we provide relevant notations for proofs in Section
\ref{sec:Proofs}. Section \ref{subsec:Expressions-weights} introduces
detailed expression for the weight.

We observe a balanced panel of $N$ i.i.d. observations $\{\{Y_{i,t},D_{i,t}\}_{t=0}^{T}\}_{i=1}^{N}$
where $Y_{i,t}\in\mathbb{R}$ is a real valued outcome variable and
$D_{i,t}\in\left\{ 0,1\right\} $ is a binary absorbing treatment
status variable: $D_{i,t}=0$ if $i$ is not treated in period $t$
and $D_{i,t}=1$ if $i$ is treated in period $t$. 

Since the treatment is absorbing, we can aggregate the treatment path
$\{D_{i,t}\}_{t=0}^{T}$ to a single discrete random variable $E_{i}=\min\{t:D_{i,t}=1\}$,
which is the period of the initial treatment. Additionally, we define
$D_{i,t}^{\ell}\coloneqq\mathbf{1}\{t-E_{i}=\ell\}$ to be an indicator
for being $\ell$ time periods relative to unit $i$'s initial treatment. 

We focus on the following two-way fixed effects regression

\[
Y_{i,t}=\alpha_{i}+\lambda_{t}+\sum_{g\in\mathcal{G}}\mu_{g}\mathbf{1}\{t-E_{i}\in g\}+\upsilon_{i,t}
\]
where $g$ are disjoint sets of relative times, $\alpha_{i}$ and
$\lambda_{t}$ are unit and time fixed effects. 
\begin{defn}
\label{def:double demean}For fixed $T$, consider a random vector
$\{X_{i,t}\}_{t=0}^{T}\in\mathbb{R}^{T+1}$, jointly distributed according
to $P$. At each $t$, let $\ddot{X}_{i,t}$ denote the following
random variable:
\begin{equation}
\ddot{X}_{i,t}=X_{i,t}-E[X_{i,t}]-\frac{1}{T+1}\sum_{s=0}^{T}X_{i,s}+\frac{1}{T+1}\sum_{s=0}^{T}E[X_{i,s}]
\end{equation}
The expectation $E[X_{i,t}]$ is taken cross-sectionally with respect
to $P$ at time $t=0,\dots,T$. 
\end{defn}
\begin{rem}
Conventionally $\ddot{X}_{i,t}$ is referred to as the double-demeaned
version of $X_{i,t}$, as it removes the contemporaneous expectation
$E[X_{i,t}]$ as well as the simple average across time for unit $i$,
$\frac{1}{T+1}\sum_{s=0}^{T}X_{i,t}$. It adds back a simple average
across time of the expectations so that $\ddot{X}_{i,t}$ has the
zero-mean and zero-sum properties as detailed in Lemma \ref{lem:Properties-of-double-demea}.
\end{rem}
\begin{lem}
\emph{\label{lem:Properties-of-double-demea}(Properties of double-demeaning.)}
For any $\ddot{X}_{i,t}$ and $\ddot{Z}_{i,t}$, double-demeaned versions
of $X_{i,t}$ and $Z_{i,t}$ respectively, we have the following properties:
\end{lem}
\begin{itemize}
\item zero-mean and zero-sum: $E[\ddot{X}_{i,t}]=0$ and $\sum_{t=0}^{T}\ddot{X}_{i,t}=0$;
\item idempotent: $\sum_{t=0}^{T}E[\ddot{X}_{i,t}\ddot{Z}_{i,t}]=\sum_{t=0}^{T}E[X_{i,t}\ddot{Z}_{i,t}]=\sum_{t=0}^{T}E[\ddot{X}_{i,t}Z_{i,t}]$;
\item for any time-invariant random variables $Z_{i,t}$ such that $Z_{i,t}=Z_{i}$,
double-demeaning annihilates it: $\ddot{Z}_{i,t}=0$.
\end{itemize}
\begin{proof}
The zero-mean and zero-sum properties hold by definition of double-demeaning:
\begin{align}
E[\ddot{X}_{i,t}] & =E\left[X_{i,t}-E[X_{i,t}]-\frac{1}{T+1}\sum_{s=0}^{T}X_{i,s}+\frac{1}{T+1}\sum_{s=0}^{T}E[X_{i,s}]\right]\\
 & =E[X_{i,t}]-E[X_{i,t}]-\frac{1}{T+1}\sum_{s=0}^{T}E[X_{i,s}]+\frac{1}{T+1}\sum_{s=0}^{T}E[X_{i,s}]=0
\end{align}
and 
\begin{align}
\sum_{t=0}^{T}\ddot{X}_{i,t} & =\sum_{t=0}^{T}X_{i,t}-\sum_{t=0}^{T}E[X_{i,t}]-\sum_{t=0}^{T}\frac{1}{T+1}\sum_{s=0}^{T}X_{i,s}+\sum_{t=0}^{T}\frac{1}{T+1}\sum_{s=0}^{T}E[X_{i,s}]\\
 & =\sum_{t=0}^{T}X_{i,t}-\sum_{t=0}^{T}E[X_{i,t}]-\sum_{s=0}^{T}X_{i,s}+\sum_{s=0}^{T}E[X_{i,s}]=0
\end{align}
For the idempotent property, first note that by definition of double-demeaning,
rearranging terms we can write $E[\ddot{X}_{i,t}\ddot{Z}_{i,t}]$
as
\begin{align}
 & =E\left[\left(X_{i,t}-\frac{1}{T+1}\sum_{s=0}^{T}X_{i,s}\right)\ddot{Z}_{i,t}\right]-E\left[\left(E[X_{i,t}]-\frac{1}{T+1}\sum_{s=0}^{T}E[X_{i,s}]\right)\ddot{Z}_{i,t}\right]\\
 & =E\left[\left(X_{i,t}-\frac{1}{T+1}\sum_{s=0}^{T}X_{i,s}\right)\ddot{Z}_{i,t}\right]-\left(E[X_{i,t}]-\frac{1}{T+1}\sum_{s=0}^{T}E[X_{i,s}]\right)\cdot E[\ddot{Z}_{i,t}]
\end{align}
By the zero-mean property $E[\ddot{Z}_{i,t}]=0$, the second term
of the above expression is zero. Summing the first term over $t$,
we have 
\begin{align}
 & \sum_{t=0}^{T}E[X_{i,t}\ddot{Z}_{i,t}]-\sum_{t=0}^{T}E\left[\left(\frac{1}{T+1}\sum_{s=0}^{T}X_{i,s}\right)\ddot{Z}_{i,t}\right]=\sum_{t=0}^{T}E[X_{i,t}\ddot{Z}_{i,t}]-E[\left(\frac{1}{T+1}\sum_{s=0}^{T}X_{i,s}\right)\sum_{t=0}^{T}\ddot{Z}_{i,t}]=\sum_{t=0}^{T}E[X_{i,t}\ddot{Z}_{i,t}]
\end{align}
where the last equality follows from the zero-sum property $\sum_{t=0}^{T}\ddot{Z}_{i,t}=0$.
This proves $\sum_{t=0}^{T}E[\ddot{X}_{i,t}\ddot{Z}_{i,t}]=\sum_{t=0}^{T}E[X_{i,t}\ddot{Z}_{i,t}]$.
Similarly we can show $\sum_{t=0}^{T}E[\ddot{X}_{i,t}\ddot{Z}_{i,t}]=\sum_{t=0}^{T}E[\ddot{X}_{i,t}Z_{i,t}]$. 

The annihilating property of double-meaning holds by definition for
any time-invariant random variable. Specifically, plugging in $Z_{i,t}=Z_{i}$
we have 
\begin{align}
\ddot{Z}_{i,t} & =Z_{i,t}-E[Z_{i,t}]-\frac{1}{T+1}\sum_{s=0}^{T}Z_{i,s}+\frac{1}{T+1}\sum_{s=0}^{T}E[Z_{i,s}]\\
 & =Z_{i}-E[Z_{i}]-Z_{i}+E[Z_{i}]=0
\end{align}
\end{proof}

\subsection{Expression of the weight\label{subsec:Expressions-weights}}

In this section, we provide detailed definitions for the three elements
in the expression of the weight $\omega_{e,\ell'}^{\ell}=\sigma_{e,\cdot}'\mathbf{\Delta}_{e+\ell'}A_{\ell}^{-1}$.
Recall $\omega_{e,\ell'}^{\ell}$ is the weight associated with $CATT_{e,\ell'}$
in the linear combination underlying $\mu_{\ell}$:
\begin{align*}
\omega_{e,\ell'}^{\ell} & =\underbrace{\mathbf{e}_{\ell}^{\intercal}\left(\sum_{t=0}^{T}E[\ddot{\mathbf{D}}_{i,t}\mathbf{D}_{i,t}^{\intercal}]\right)^{-1}}_{\coloneqq(A_{\ell}^{-1})^{\intercal}}E[\ddot{\mathbf{D}}_{i,t}D_{i,t}^{\ell'}\cdot\mathbf{1}\left\{ E_{i}=e\right\} ]\\
 & =(A_{\ell}^{-1})^{\intercal}\sum_{t=0}^{T}Cov(\mathbf{D}_{i,t}-\frac{1}{T+1}\sum_{t=0}^{T}\mathbf{D}_{i,t},D_{i,t}^{\ell'}\cdot\mathbf{1}\left\{ E_{i}=e\right\} -\frac{1}{T+1}\sum_{t=0}^{T}D_{i,t}^{\ell'}\cdot\mathbf{1}\left\{ E_{i}=e\right\} )\\
 & =(A_{\ell}^{-1})^{\intercal}Pr\left\{ E_{i}=e\right\} \cdot\Big(\big(E[\mathbf{D}_{i,e+\ell'}-\frac{1}{T+1}\sum_{t=0}^{T}\mathbf{D}_{i,e+\ell'}\mid E_{i}=e]\big)-\big(E[\mathbf{D}_{i,e+\ell'}]-E[\frac{1}{T+1}\sum_{t=0}^{T}\mathbf{D}_{i,e+\ell'}]\big)\Big)\\
 & =(A_{\ell}^{-1})^{\intercal}\sum_{e'}E[\mathbf{D}_{i,e+\ell'}-\frac{1}{T+1}\sum_{t}\mathbf{D}_{i,t}\mid E_{i}=e']\cdot Cov\left(\mathbf{1}\{E_{i}=e\},\mathbf{1}\{E_{i}=e'\}\right)\\
 & \coloneqq(\sigma_{e,\cdot})^{\intercal}\mathbf{\Delta}_{e+\ell'}A_{\ell}^{-1}
\end{align*}
We note that 
\begin{itemize}
\item The first term is the vector $\sigma_{e,\cdot}=(\sigma_{e,e'})_{e'\in supp(E_{i})}$
of the covariance between cohort indicators. Each entry $\sigma_{e,e'}=Cov\left(\mathbf{1}\{E_{i}=e\},\mathbf{1}\{E_{i}=e'\}\right)$
is the covariance between being in cohort $e$ and being in cohort
$e'$.
\item The second term is the matrix $\mathbf{\Delta}_{e+\ell'}$ of demeaned
relative time indicators. The rows of the matrix are indexed by cohorts
$e'\in supp(E_{i})$ and the columns of the matrix are indexed by
relative time indicators included in the specification (\ref{eq:dynamic FE}).
The entry that corresponds to cohort $e'$ and relative time indicator
$D_{i,t}^{\ell}$ is $E[D_{i,e+\ell'}^{\ell}-\frac{1}{T+1}\sum_{t}D_{i,t}^{\ell}\mid E_{i}=e']$.
Note that the sum $\sum_{t}D_{i,t}^{\ell}$ is equal to $\mathbf{1}\left\{ E_{i}\in\mathcal{I}_{\ell}\right\} $
the indicator for whether this individual ever experience the relative
time $\ell$.
\item The third term is the row vector of $A^{-1}$ that corresponds to
the relative time indicator $D_{i,t}^{\ell}$. Here $A$ is the covariance
matrix of demeaned relative time indicators included in the specification
(\ref{eq:dynamic FE}). Specifically, the entry that corresponds to
the covariance between demeaned $D_{i,t}^{\ell}$ and $D_{i,t}^{\ell'}$
is 
\[
A=\sum_{t}Cov\left(D_{i,t}^{\ell}-\frac{1}{T+1}\sum_{t}D_{i,t}^{\ell},D_{i,t}^{\ell'}-\frac{1}{T+1}\sum_{t}D_{i,t}^{\ell'}\right).
\]
Plugging in $\sum_{t}D_{i,t}^{\ell}=\mathbf{1}\left\{ E_{i}\in\mathcal{I}_{\ell}\right\} $
gives the expression in the main text.
\end{itemize}

\section{Proofs of decomposition results\label{sec:Proofs}}

\subsection*{Proof of Proposition \ref{prop: dyno FE coeff weights-1}}
\begin{proof}
Collect the relative time indicators in a column vector $\mathbf{D}_{i,t}=\left(\mathbf{1}\{t-E_{i}\in g\}\right){}_{g\in\mathcal{G}}^{\intercal}$.
Collect their corresponding coefficients in a column vector $\mu=(\mu_{g})_{g\in\mathcal{G}}^{\intercal}$.
Partialling out the unit and time fixed effects, Regression (\ref{eq:dynamic})
is 
\begin{equation}
\ddot{Y}_{i,t}=\mu^{\intercal}\ddot{\mathbf{D}}_{i,t}+\upsilon_{i,t}
\end{equation}
where $\ddot{X}_{i,t}$ is time- and cross-sectional demeaned version
of $X_{i,t}$ as defined in Definition \ref{def:double demean}. By
the idempotent property of Lemma \ref{lem:Properties-of-double-demea}
\begin{equation}
\mu_{g}=\mathbf{e}_{g}^{\intercal}\left(\sum_{t=0}^{T}E[\ddot{\mathbf{D}}_{i,t}\mathbf{D}_{i,t}^{\intercal}]\right)^{-1}\sum_{t=0}^{T}E[\ddot{\mathbf{D}}_{i,t}Y_{i,t}]\label{eq:dyno-ols}
\end{equation}
where $\mathbf{e}_{g}$ is a column vector with one in the entry corresponding
to the entry of $\mathbf{1}\{t-E_{i}\in g\}$ in $\mathbf{D}_{i,t}$,
and zero otherwise. 

To further develop the expression for the population regression coefficients
$\mu_{g}$, we note that by Lemma \ref{lem:Properties-of-double-demea},
we have $\sum_{t=0}^{T}E[\ddot{\mathbf{D}}_{i,t}Y_{i,0}^{\infty}]=\sum_{t=0}^{T}E[\mathbf{D}_{i,t}\ddot{Y}_{i,0}^{\infty}]=0$
since $Y_{i,0}^{\infty}$ is time-invariant, and $E[\ddot{\mathbf{D}}_{i,t}]=0$
by the zero-mean property. 
\begin{align}
\sum_{t=0}^{T}E[\ddot{\mathbf{D}}_{i,t}Y_{i,t}]= & \sum_{t=0}^{T}E[\ddot{\mathbf{D}}_{i,t}Y_{i,t}]-\underbrace{\sum_{t=0}^{T}E[\ddot{\mathbf{D}}_{i,t}Y_{i,0}^{\infty}]}_{=0}=\sum_{t=0}^{T}E[\ddot{\mathbf{D}}_{i,t}\left(Y_{i,t}-Y_{i,0}^{\infty}\right)]=\sum_{t=0}^{T}E[\ddot{\mathbf{D}}_{i,t}E[Y_{i,t}-Y_{i,0}^{\infty}\vert E_{i}]]\\
= & \sum_{t=0}^{T}E[\ddot{\mathbf{D}}_{i,t}E[Y_{i,t}-Y_{i,0}^{\infty}\vert E_{i}]]-\sum_{t=0}^{T}\underbrace{E[\ddot{\mathbf{D}}_{i,t}]}_{=0}E[Y_{i,t}^{\infty}-Y_{i,0}^{\infty}]\\
= & \sum_{t=0}^{T}E[\ddot{\mathbf{D}}_{i,t}\left(E[Y_{i,t}-Y_{i,0}^{\infty}\vert E_{i}]-E[Y_{i,t}^{\infty}-Y_{i,0}^{\infty}]\right)].
\end{align}
We can abbreviate the term in the parentheses as $f(E_{i},t)$ to
emphasize it is a function of $E_{i}$ and $t$. Since $E_{i}$ and
$t$ take on discrete values, we can write 
\begin{align}
E[Y_{i,t}-Y_{i,0}^{\infty}\vert E_{i}]-E[Y_{i,t}^{\infty}-Y_{i,0}^{\infty}]\eqqcolon f(E_{i},t) & =\sum_{e=0}^{\infty}f(e,t)\cdot\mathbf{1}\left\{ E_{i}=e\right\} \\
 & =\sum_{e=0}^{\infty}\sum_{\ell=-T}^{T}f(e,e+\ell)\cdot\mathbf{1}\{t-e=\ell\}\cdot\mathbf{1}\left\{ E_{i}=e\right\} \\
 & =\sum_{\ell=-T}^{T}\sum_{e=0}^{\infty}D_{i,t}^{\ell}\mathbf{1}\left\{ E_{i}=e\right\} \cdot f(e,e+\ell).
\end{align}
Even though we sum over $e\in\{0,\dots,\infty\}$, if there are no
never-treated units (or no units in a particular cohort $e'$), then
we have no units take on values $E_{i}=\infty$ (or $E_{i}=e'$).
The range of the summation is therefore still sensible. Note that
we replace $\mathbf{1}\{t-e=\ell\}\cdot\mathbf{1}\left\{ E_{i}=e\right\} $
with $D_{i,t}^{\ell}\cdot\mathbf{1}\left\{ E_{i}=e\right\} $ because
\begin{equation}
D_{i,t}^{\ell}\cdot\mathbf{1}\left\{ E_{i}=e\right\} =\mathbf{1}\{t-E_{i}=\ell\}\cdot\mathbf{1}\left\{ E_{i}=e\right\} =\mathbf{1}\{t-e=\ell\}\cdot\mathbf{1}\left\{ E_{i}=e\right\} .
\end{equation}

Using the above expression, the coefficient $\mu_{g}$ can be written
as
\begin{align}
\mu_{g}= & \mathbf{e}_{g}^{\intercal}\left(\sum_{t=0}^{T}E[\ddot{\mathbf{D}}_{i,t}\mathbf{D}_{i,t}^{\intercal}]\right)^{-1}\sum_{t=0}^{T}\sum_{\ell=-T}^{T}\sum_{e=0}^{\infty}E[\ddot{\mathbf{D}}_{i,t}D_{i,t}^{\ell}\cdot\mathbf{1}\left\{ E_{i}=e\right\} ]f(e,e+\ell)\\
= & \sum_{t=0}^{T}\sum_{l\in g}\sum_{e=0}^{\infty}\underbrace{\mathbf{e}_{g}^{\intercal}\left(\sum_{t=0}^{T}E[\ddot{\mathbf{D}}_{i,t}\mathbf{D}_{i,t}^{\intercal}]\right)^{-1}E[\ddot{\mathbf{D}}_{i,t}D_{i,t}^{\ell}\cdot\mathbf{1}\left\{ E_{i}=e\right\} ]}_{=\omega_{e,\ell}^{g}}f(e,e+\ell)\label{eq:weights-formula}\\
 & +\sum_{t=0}^{T}\sum_{g'\neq g}\sum_{\ell\in g'}\sum_{e=0}^{\infty}\underbrace{\mathbf{e}_{g}^{\intercal}\left(\sum_{t=0}^{T}E[\ddot{\mathbf{D}}_{i,t}\mathbf{D}_{i,t}^{\intercal}]\right)^{-1}E[\ddot{\mathbf{D}}_{i,t}D_{i,t}^{\ell}\cdot\mathbf{1}\left\{ E_{i}=e\right\} ]}_{=\omega_{e,\ell}^{g}}f(e,e+\ell)\\
 & +\sum_{t=0}^{T}\sum_{\ell\in g^{excl}}\sum_{e=0}^{\infty}\underbrace{\mathbf{e}_{g}^{\intercal}\left(\sum_{t=0}^{T}E[\ddot{\mathbf{D}}_{i,t}\mathbf{D}_{i,t}^{\intercal}]\right)^{-1}E[\ddot{\mathbf{D}}_{i,t}D_{i,t}^{\ell}\cdot\mathbf{1}\left\{ E_{i}=e\right\} ]}_{=\omega_{e,\ell}^{g}}f(e,e+\ell).
\end{align}
The superscript $g$ in $\omega_{e,\ell}^{g}$ indexes $\mu_{g}$.
The subscript $\ell$ in $\omega_{e,\ell}^{g}$ indexes $f(e,e+\ell)$.
The expression above the braces makes it clear that the weight $\omega_{e,\ell}^{g}$
is equal to the population regression coefficient on $\mathbf{1}\{t-E_{i}\in g\}$
from regressing $D_{i,t}^{\ell}\cdot\mathbf{1}\left\{ E_{i}=e\right\} $
on all the bin indicators i.e. $\{\mathbf{1}\{t-E_{i}\in g\}\}_{g\in\mathcal{G}}$
and two-way fixed effects included in (\ref{eq:dynamic}). This proves
Proposition \ref{prop: dyno FE coeff weights-1}. We next show several
properties of these weights in the expression of $\mu_{g}$:
\begin{enumerate}
\item For relative times of $\mu_{g}$'s own bin i.e. $\ell\in g$, the
weights sum to one $\sum_{\ell\in g}\sum_{e}\omega_{e,\ell}^{g}=1$. 
\item For relative times belonging to some other bin i.e. $\ell\in g'$
and $g'\neq g$, the weights sum to zero $\sum_{\ell\in g'}\sum_{e}\omega_{e,\ell}^{g}=0$
for each bin $g'$. 
\item For relative times not contained in $\mathcal{G}$ i.e. $\ell\in g^{excl}$,
the weights sum to negative one $\sum_{\ell\in g^{excl}}\sum_{e}\omega_{e,\ell}^{g}=-1$. 
\item If there are never-treated units i.e. $\infty\in supp(E_{i})$, we
have $\omega_{\infty,\ell}^{g}=0$ for all $g$ and $\ell$.
\end{enumerate}
To see that 1) $\sum_{\ell\in g}\sum_{e}\omega_{e,\ell}^{g}=1$, note
that the sum of weights is equal to
\begin{align}
\sum_{t=0}^{T}\sum_{\ell\in g}\sum_{e}\omega_{e,\ell}^{g}=\sum_{t=0}^{T}\sum_{\ell\in g}\sum_{e} & \mathbf{e}_{g}^{\intercal}\left(\sum_{t=0}^{T}E[\ddot{\mathbf{D}}_{i,t}\mathbf{D}_{i,t}^{\intercal}]\right)^{-1}E[\ddot{\mathbf{D}}_{i,t}D_{i,t}^{\ell}\cdot\mathbf{1}\left\{ E_{i}=e\right\} ]\\
= & \mathbf{e}_{g}^{\intercal}\left(\sum_{t=0}^{T}E[\ddot{\mathbf{D}}_{i,t}\mathbf{D}_{i,t}^{\intercal}]\right)^{-1}\sum_{t=0}^{T}E[\ddot{\mathbf{D}}_{i,t}\sum_{\ell\in g}D_{i,t}^{\ell}\cdot\sum_{e}\mathbf{1}\left\{ E_{i}=e\right\} ]\\
= & \mathbf{e}_{g}^{\intercal}\left(\sum_{t=0}^{T}E[\ddot{\mathbf{D}}_{i,t}\mathbf{D}_{i,t}^{\intercal}]\right)^{-1}\sum_{t=0}^{T}E[\ddot{\mathbf{D}}_{i,t}\mathbf{1}\{t-E_{i}\in g\}].
\end{align}
It is thus the population regression coefficient on $\mathbf{1}\{t-E_{i}\in g\}$
from regressing $\mathbf{1}\{t-E_{i}\in g\}$ on $\mathbf{D}_{i,t}$
and the unit and time fixed effects, which is just one. Similarly,
for each $g'\neq g$, the sum of weights is the population regression
coefficient on $\mathbf{1}\{t-E_{i}\in g'\}$ from regressing $\mathbf{1}\{t-E_{i}\in g\}$
on $\mathbf{D}_{i,t}$ and the unit and time fixed effects, which
is zero. To see that the weights from all excluded relative time add
up to negative one across cohorts, note that the sum of these weights
is equal to
\begin{align}
\sum_{\ell\in g^{excl}}\sum_{e}\omega_{e,\ell}^{g}= & \mathbf{e}_{g}^{\intercal}\sum_{\ell\in g^{excl}}\left(\sum_{t=0}^{T}E[\ddot{\mathbf{D}}_{i,t}\mathbf{D}_{i,t}^{\intercal}]\right)^{-1}\sum_{t=0}^{T}E[\ddot{\mathbf{D}}_{i,t}D_{i,t}^{\ell}]\\
= & \mathbf{e}_{g}^{\intercal}\left(\sum_{t=0}^{T}E[\ddot{\mathbf{D}}_{i,t}\mathbf{D}_{i,t}^{\intercal}]\right)^{-1}\sum_{t=0}^{T}E\Bigl[\ddot{\mathbf{D}}_{i,t}\sum_{\ell\in g^{excl}}D_{i,t}^{\ell}\Bigr]\\
= & \mathbf{e}_{g}^{\intercal}\left(\sum_{t=0}^{T}E[\ddot{\mathbf{D}}_{i,t}\mathbf{D}_{i,t}^{\intercal}]\right)^{-1}\sum_{t=0}^{T}E[\ddot{\mathbf{D}}_{i,t}\bigl(1-\sum_{g\in\mathcal{G}}\sum_{\ell\in g}D_{i,t}^{\ell}\bigr)]\\
= & -\mathbf{e}_{g}^{\intercal}\left(\sum_{t=0}^{T}E[\ddot{\mathbf{D}}_{i,t}\mathbf{D}_{i,t}^{\intercal}]\right)^{-1}\sum_{t=0}^{T}E[\ddot{\mathbf{D}}_{i,t}\sum_{g\in\mathcal{G}}\sum_{\ell\in g}D_{i,t}^{\ell}]\\
= & -\mathbf{e}_{g}^{\intercal}\sum_{g\in\mathcal{G}}\left(\sum_{t=0}^{T}E[\ddot{\mathbf{D}}_{i,t}\mathbf{D}_{i,t}^{\intercal}]\right)^{-1}\sum_{t=0}^{T}E[\ddot{\mathbf{D}}_{i,t}\mathbf{1}\{t-E_{i}\in g\}]=-1
\end{align}
where the third equality follows from $\sum_{\ell\in g^{excl}}D_{i,t}^{\ell}+\sum_{g\in\mathcal{G}}\sum_{\ell\in g}D_{i,t}^{\ell}=\sum_{-T\leq\ell\leq T}D_{i,t}^{\ell}=1$.
The fourth equality follows from $\sum_{t}E[\ddot{\mathbf{D}}_{i,t}]=0$
due to the zero mean property of double-demeaning proved in Lemma
\ref{lem:Properties-of-double-demea}. The last line simplifies to
negative one because each term in the summation 
\begin{equation}
\left(\sum_{t=0}^{T}E[\ddot{\mathbf{D}}_{i,t}\mathbf{D}_{i,t}^{\intercal}]\right)^{-1}\sum_{t=0}^{T}E[\ddot{\mathbf{D}}_{i,t}\mathbf{1}\{t-E_{i}\in g\}]\label{eq:dyno-excl-coeff}
\end{equation}
equals the population regression coefficient on $\mathbf{D}_{i,t}$
from regressing $\mathbf{1}\{t-E_{i}\in g\}$ on $\mathbf{D}_{i,t}$
and the unit and time fixed effects, which is a column vector with
one in the entry corresponding to $\mathbf{1}\{t-E_{i}\in g\}$. Summing
over $g$, the above expression is equal to a column vector of ones.

Finally, we note that the weight for the never-treated cohort is always
zero $\omega_{\infty,\ell}^{g}=0$. This is because $D_{i,t}^{\ell}=0$
for all $\ell$ when $E_{i}=\infty$.
\end{proof}

\subsection*{Proof of Proposition \ref{prop: dyno FE coeff weights}}
\begin{proof}
Under the parallel trends assumption, we can replace each term in
Proposition \ref{prop: dyno FE coeff weights-1} with 
\begin{equation}
E[Y_{i,e+\ell}-Y_{i,0}^{\infty}\vert E_{i}]-E[Y_{i,e+\ell}^{\infty}-Y_{i,0}^{\infty}]=CATT_{e,\ell}+\underbrace{E[Y_{i,t}^{\infty}-Y_{i,0}^{\infty}\vert E_{i}]-E[Y_{i,t}^{\infty}-Y_{i,0}^{\infty}]}_{=0}
\end{equation}
 for $t=e+\ell$. 
\end{proof}

\subsection*{Proof of Proposition \ref{prop:dyno FE coeff weights no anti}}
\begin{proof}
Under the no anticipation assumption, setting the corresponding terms
to zero in the linear combination underlying $\mu_{\underline{g}}$
displayed in Proposition \ref{prop: dyno FE coeff weights}.
\end{proof}

\subsection*{Proof of Proposition \ref{prop:dyno FE coeff weights homo}}
\begin{proof}
Under the assumption of homogeneous treatment effect, we have $CATT_{e,\ell}=ATT_{\ell}$
for all $e$. This simplifies Proposition \ref{prop: dyno FE coeff weights}
to 
\begin{align}
\sum_{\ell\in g}\left(\sum_{e}\omega_{e,\ell}^{g}\right)ATT_{\ell}+\sum_{g'\neq g}\sum_{\ell\in g'}\left(\sum_{e}\omega_{e,\ell}^{g}\right)ATT_{\ell}+\sum_{\ell\in g^{excl}}\left(\sum_{e}\omega_{e,\ell}^{g}\right)ATT_{\ell}.
\end{align}
Recall the weight $\omega_{e,\ell}^{g}$ is equal to the population
regression coefficient on $\mathbf{1}\{t-E_{i}\in g\}$ from regressing
$D_{i,t}^{\ell}\cdot\mathbf{1}\left\{ E_{i}=e\right\} $ on all the
bin indicators, i.e. $\{\mathbf{1}\{t-E_{i}\in g\}\}_{g\in\mathcal{G}}$
and two-way fixed effects included in (\ref{eq:dynamic}). Summing
over $e$ would imply the weight $\omega_{\ell}^{g}\coloneqq\sum_{e}\omega_{e,\ell}^{g}$
is equal to the population regression coefficient on $\mathbf{1}\{t-E_{i}\in g\}$
from regressing $D_{i,t}^{\ell}$ on all the bin indicators, i.e.
$\{\mathbf{1}\{t-E_{i}\in g\}\}_{g\in\mathcal{G}}$ and two-way fixed
effects included in (\ref{eq:dynamic}).
\end{proof}

\section{Supplementary results related to IW estimator}

In this section, we first provide detailed definition for the IW estimator,
and then state its asymptotic behavior. We then prove its asymptotic
normality and derive its asymptotic variance in Section \ref{subsec:IW-proof}. 
\begin{defn}
\label{def:The-IW-estimator}The IW estimator for $\nu_{g}$, a weighted
average of $CATT_{e,\ell\in g}$ is constructed via the following
three steps. We focus on the setting without never-treated units,
using pre-period $s=-1$ and the last-treated cohort as the control
cohort $C=\{\max\{E_{i}\}\}$. The setting with never-treated units
can be defined similarly by modifying the interacted specification.
\begin{description}
\item [{Step~1}] Estimate the $CATT_{e,\ell}$ using the interacted specification.
\begin{align}
Y_{i,t} & =\alpha_{i}+\lambda_{t}+\sum_{e\neq\max\{E_{i}\}}\sum_{\ell\neq-1}\delta_{e,\ell}(\mathbf{1}\{E_{i}=e\}\cdot D_{i,t}^{\ell})+\epsilon_{i,t}\\
 & =\alpha_{i}+\lambda_{t}+\mathbf{B}_{i,t}^{\intercal}\boldsymbol{\delta}+\epsilon_{i,t}
\end{align}
on observations from $t=0,\dots,\max\{E_{i}\}-1$ and $E_{i}\neq0$.
We need to drop time period beyond $\max\{E_{i}\}$ because DID estimators
for $CATT_{e,\ell}$ for $e+\ell\geq\max\{E_{i}\}$do not exist as
explained in the main text. We need to exclude cohort $0$ from estimation
because DID estimators for $CATT_{0,\ell}$ do not exist as explained
in the main text. Note that among regressors we exclude interactions
with $\mathbf{1}\{E_{i}=\max\{E_{i}\}\}$ and we exclude interactions
with $D_{i,t}^{-1}$. Here $\mathbf{B}_{i,t}$ is a column vector
collecting the interactions $\mathbf{1}\{E_{i}=e\}\cdot D_{i,t}^{\ell}$.
Similarly, $\boldsymbol{\delta}$ is a column vector collecting the
coefficients $\delta_{e,\ell}$ on $\mathbf{1}\{E_{i}=e\}\cdot D_{i,t}^{\ell}$.
The matrix notation is used later to derive the asymptotic variance
of IW estimators.\\
\item [{Step~2}] Estimate the weights, which are cohort shares among cohorts
that experience at least $\ell$ periods of treatment relative to
the initial treatment. \\
Denote by $N_{e}\coloneqq\sum_{i=1}^{N}\mathbf{1}\left\{ E_{i}=e\right\} $
the number of units in cohort $e$. Denote by $h^{\ell}=\{e:1-\ell\leq e\leq\max\{E_{i}\}-1-\ell\}$
to be the set of cohorts that experience at least $\ell$ periods
of treatment relative to the initial treatment. Below $vec\left(A\right)$
vectorizes matrix $A$ by stacking its columns. 

Define $\widehat{\mathbf{f}}^{\ell}$ to be a matrix with its $\left(t,e\right)^{th}$
entry equal to $\mathbf{1}\left\{ t-e=\ell\right\} \cdot N_{e}/\sum_{e\in h^{\ell}}N_{e}$.
Here $\mathbf{1}\left\{ t-e=\ell\right\} $ indicates when cohort
$e$ experiences exactly $\ell$ periods of treatment and $N_{e}/\sum_{e\in h^{\ell}}N_{e}$
is equal to the sample share of units in cohort $e$ among units that
experience at least $l$ periods of treatment. Denote by $\mathbf{f}^{\ell}$
the probability limit of $\widehat{\mathbf{f}}^{\ell}$, which is
a matrix with its $\left(t,e\right)^{th}$ entry equal to $\mathbf{1}\left\{ t-e=\ell\right\} \cdot Pr\{E_{i}=t-l\mid E_{i}\in h^{\ell}\}$.
For example, with $E_{i}\in\{0,1,2,3\}$ for $T=3$ and $\ell=0$,
we have $h^{0}=\{1,2\}$ and thus
\begin{equation}
\widehat{\mathbf{f}}^{0}=\left(\begin{array}{cc}
\frac{N_{1}}{N_{1}+N_{2}} & 0\\
0 & \frac{N_{2}}{N_{1}+N_{2}}
\end{array}\right)
\end{equation}
and its probability limit is 
\begin{equation}
\mathbf{f}^{0}=\left(\begin{array}{cc}
Pr\left\{ E_{i}=1\vert1\le E_{i}\leq2\right\}  & 0\\
0 & Pr\left\{ E_{i}=2\vert1\le E_{i}\leq2\right\} 
\end{array}\right).
\end{equation}
In proof below, we show that the weight matrix estimator $\widehat{\mathbf{f}}^{\ell}$
is asymptotically normal $\sqrt{N}(vec(\widehat{\mathbf{f}}^{\ell})-vec(\mathbf{f}^{\ell}))\rightarrow_{d}N(0,\Sigma_{f^{\ell}})$. 
\item [{Step~3}] Compute the IW estimator as the weighted sum of $\widehat{\delta}_{e,\ell}$
(estimated in Step 1) using weights (estimated in Step 2).

To form an estimator alternative to the dynamic FE estimator $\widehat{\mu}_{g}$
from Regression (\ref{eq:dynamic}), we can use 
\begin{equation}
\widehat{\nu}_{g}\coloneqq\frac{1}{\left|g\right|}\sum_{\ell\in g}\sum_{e\in h^{\ell}}\frac{N_{e}}{\sum_{e\in h^{\ell}}N_{e}}\widehat{\delta}{}_{e,\ell}=\frac{1}{\left|g\right|}\sum_{\ell\in g}vec(\widehat{\mathbf{f}}^{\ell})^{\intercal}\widehat{\boldsymbol{\delta}}.
\end{equation}

\end{description}
\end{defn}
With a few standard assumptions (which we present together below as
Assumption \ref{assu:(The-saturated-regression}) on Regression (\ref{eq: saturated event studies model}),
we can show that the IW estimators are asymptotically normal. 
\begin{assumption}
\emph{\label{assu:(The-saturated-regression}(The saturated regression
assumptions).}
\begin{enumerate}
\item There are observations from at least two cohorts that are not treated
in $t=0$.
\item Independent, identically distributed cross-sectional observations:
$\{(E_{i},\mathbf{Y}_{i}):i=1,2,\dots,N\}$ are i.i.d. draws from
their joint distribution where $\mathbf{Y}_{i}=(Y_{i,0},\dots,Y_{i,T})^{\intercal}$
is a $T\times1$ vector.
\item Large outliers are unlikely: $(\mathbf{B}_{i,t},\epsilon_{i,t})$
have nonzero finite fourth moments.
\item Denote by $\ddot{\mathbf{B}}$ the data matrix, whose rows consist
of $\ddot{\mathbf{B}}_{i,t}^{\intercal}$, double-demeaned version
of $\mathbf{B}_{i,t}^{\intercal}$. Assume $\ddot{\mathbf{B}}$ has
full rank. If $\ddot{\mathbf{B}}$ is reduced-rank because cohort
$e$ is empty, then discard regressors involving $\mathbf{1}\{E_{i}=e\}$. 
\end{enumerate}
\end{assumption}
Denote by $\boldsymbol{\delta}$ the probability limit of $\widehat{\boldsymbol{\delta}}$,
which is a vector of $CATT_{e,\ell}$. We next state the asymptotic
distribution of the IW estimators specifically for $\nu_{\ell}$,
the average effect at relative time $\ell$. Results for the more
general case of $\nu_{g}$, which is the average effects across relative
times $\ell\in g$, can be derived similarly. Note that we use a clustered
variance-covariance structure to allow the possibility that $Y_{i,t}$
are dependent across $t$ due to serial correlation. 
\begin{prop}
\emph{\label{prop: CAN IW dynamic}(Consistency and asymptotic normality
of the }IW\emph{ estimators for $\nu_{\ell}$).} Under the assumptions
of Proposition \ref{prop:The-DID-estimator} and Assumption \ref{assu:(The-saturated-regression},
the IW estimator converges in probability to 
\begin{equation}
\widehat{\nu}_{\ell}\rightarrow_{p}\sum_{e\in h^{\ell}}Pr\left\{ E_{i}=e\mid E_{i}\in h^{\ell}\right\} CATT_{e,\ell}=vec(\mathbf{f}^{\ell})^{\intercal}\boldsymbol{\delta}.
\end{equation}
The asymptotic distribution of this estimator is 
\begin{equation}
\sqrt{N}\left(\widehat{\nu}_{\ell}-vec\left(\mathbf{f}^{\ell}\right)^{\intercal}\boldsymbol{\delta}\right)\rightarrow_{d}N\left(0,\boldsymbol{\delta}^{\intercal}\Sigma_{f^{\ell}}\mathbf{\boldsymbol{\delta}}+\Sigma_{\ell}\right)
\end{equation}
for $\Sigma_{f^{\ell}}$ the asymptotic variance of $\sqrt{N}(vec(\widehat{\mathbf{f}}^{\ell})-vec(\mathbf{f}^{\ell}))$
where $\widehat{\mathbf{f}}^{\ell}$ is the weight matrix estimator
and
\begin{alignat}{2}
\begin{alignedat}[t]{1}\boldsymbol{V}_{\ddot{\mathbf{B}}} & =\sum_{t=0}^{\max\{E_{i}\}-1}E[\ddot{\mathbf{B}}_{i,t}\ddot{\mathbf{B}}_{i,t}^{\intercal}]\end{alignedat}
 & \:\ \ \ \  & \begin{alignedat}[t]{1}\Sigma_{\ell} & =vec(\mathbf{f}^{\ell})^{\intercal}\boldsymbol{V}_{\ddot{\mathbf{B}}}^{-1}Var\left(\sum_{t=0}^{\max\{E_{i}\}-1}\ddot{\mathbf{B}}_{i,t}\ddot{\epsilon}_{i,t}\right)\boldsymbol{V}_{\ddot{\mathbf{B}}}^{-1}vec(\mathbf{f}^{\ell})\end{alignedat}
.
\end{alignat}
\end{prop}

\subsection{Proof of the validity of the IW estimator\label{subsec:IW-proof}}

\subsection*{Proof of Proposition \ref{prop:The-DID-estimator}}
\begin{proof}
Provided that the DID estimator is well-defined with pre-period $s$
and control cohorts $C$, we first show the DID estimator is an unbiased
and consistent estimator for $E[Y_{i,e+\ell}-Y_{i,s}\mid E_{i}=e]-E[Y_{i,e+\ell}-Y_{i,s}\mid E_{i}\in C]$.
We prove the first term in the DID estimator $\frac{\mathbb{E}_{N}[\left(Y_{i,e+\ell}-Y_{i,s}\right)\cdot\text{\ensuremath{\mathbf{1}}}\left\{ E_{i}=e\right\} ]}{\mathbb{E}{}_{N}[\text{\ensuremath{\mathbf{1}}}\left\{ E_{i}=e\right\} ]}$
is unbiased and consistent for $E[Y_{i,e+\ell}-Y_{i,s}\mid E_{i}=e$.
The argument for the second term in the DID estimator follows similarly. 

For unbiasedness, note that by the Law of Iterated Expectations and
linearity of $\mathbb{E}_{N}$, we have
\begin{align}
E[\frac{\mathbb{E}_{N}[\left(Y_{i,e+\ell}-Y_{i,s}\right)\cdot\text{\ensuremath{\mathbf{1}}}\left\{ E_{i}=e\right\} ]}{\mathbb{E}{}_{N}[\text{\ensuremath{\mathbf{1}}}\left\{ E_{i}=e\right\} ]}] & =E\biggl[E\Bigl[\frac{\mathbb{E}_{N}[\left(Y_{i,e+\ell}-Y_{i,s}\right)\cdot\text{\ensuremath{\mathbf{1}}}\left\{ E_{i}=e\right\} ]}{\mathbb{E}{}_{N}[\text{\ensuremath{\mathbf{1}}}\left\{ E_{i}=e\right\} ]}\mid E_{i}\Bigr]\biggr]\\
 & =E\biggl[\frac{\mathbb{E}_{N}\Bigl[E[\left(Y_{i,e+\ell}-Y_{i,s}\right)\cdot\text{\ensuremath{\mathbf{1}}}\left\{ E_{i}=e\right\} \mid E_{i}]\Bigr]}{\mathbb{E}{}_{N}[\text{\ensuremath{\mathbf{1}}}\left\{ E_{i}=e\right\} ]}\biggr]\\
 & =E\biggl[\frac{\mathbb{E}_{N}\Bigl[E[Y_{i,e+\ell}-Y_{i,s}\mid E_{i}=e]\cdot\text{\ensuremath{\mathbf{1}}}\left\{ E_{i}=e\right\} \Bigr]}{\mathbb{E}{}_{N}[\text{\ensuremath{\mathbf{1}}}\left\{ E_{i}=e\right\} ]}\biggr]\\
 & =E[\frac{E[\left(Y_{i,e+\ell}-Y_{i,s}\right)\mid E_{i}=e]\cdot\mathbb{E}_{N}[\text{\ensuremath{\mathbf{1}}}\left\{ E_{i}=e\right\} ]}{\mathbb{E}{}_{N}[\text{\ensuremath{\mathbf{1}}}\left\{ E_{i}=e\right\} ]}]\\
 & =E[\left(Y_{i,e+\ell}-Y_{i,s}\right)\mid E_{i}=e].
\end{align}
For consistency, by the Law of Large Numbers the numerator and the
denominator converge in probability to $E[\left(Y_{i,e+\ell}-Y_{i,s}\right)\cdot\text{\ensuremath{\mathbf{1}}}\left\{ E_{i}=e\right\} ]$
and $Pr\left\{ E_{i}=e\right\} $ respectively. By the Law of Iterated
Expectations and Slutsky's theorem, it converges in probability to
$E[Y_{i,e+\ell}-Y_{i,s}\mid E_{i}=e]$. 

To show that the DID estimator is an unbiased and consistent estimator
for $CATT_{e,\ell}$, it remains to show $E[Y_{i,e+\ell}-Y_{i,s}\mid E_{i}=e]-E[Y_{i,e+\ell}-Y_{i,s}\mid E_{i}\in C]=CATT_{e,\ell}$. 

Since $s<e$ and $c>e+\ell$, we have 
\begin{align}
 & E[Y_{i,e+\ell}-Y_{i,s}\mid E_{i}=e]-E[Y_{i,e+\ell}-Y_{i,s}\mid E_{i}\in C]\\
= & E[Y_{i,e+\ell}^{e}-Y_{i,s}^{e}\mid E_{i}=e]-\sum_{c\in C}Pr\left\{ E_{i}=c\mid E_{i}\in C\right\} E[Y_{i,e+\ell}^{c}-Y_{i,s}^{c}\mid E_{i}=c]\\
= & E[Y_{i,e+\ell}^{e}-Y_{i,s}^{\infty}\mid E_{i}=e]-\sum_{c\in C}Pr\left\{ E_{i}=c\mid E_{i}\in C\right\} E[Y_{i,e+\ell}^{\infty}-Y_{i,s}^{\infty}\mid E_{i}=c]\\
= & E[Y_{i,e+\ell}^{e}-Y_{i,e+\ell}^{\infty}\mid E_{i}=e]+E[Y_{i,e+\ell}^{\infty}-Y_{i,s}^{\infty}\mid E_{i}=e]\\
 & -\sum_{c\in C}Pr\left\{ E_{i}=c\mid E_{i}\in C\right\} E[Y_{i,e+\ell}^{\infty}-Y_{i,s}^{\infty}\mid E_{i}=c]\\
= & E[Y_{i,e+\ell}^{e}-Y_{i,e+\ell}^{\infty}\vert E_{i}=e]+E[Y_{i,e+\ell}^{\infty}-Y_{i,s}^{\infty}]-E[Y_{i,e+\ell}^{\infty}-Y_{i,s}^{\infty}]\\
= & E[Y_{i,e+\ell}^{e}-Y_{i,e+\ell}^{\infty}\vert E_{i}=e]
\end{align}
where the second equality follows from Assumption \ref{assum:no-anti}
and the fourth equality follows from Assumption \ref{assum:PT-baseline}. 
\end{proof}

\subsection*{Proof of Proposition \ref{prop: CAN IW dynamic} }
\begin{proof}
We first show the consistency and asymptotic normality of the weight
estimators. The weight estimators $\widehat{\mathbf{f}}^{\ell}$ are
consistent since $\frac{N_{e}}{\sum_{e\in h^{\ell}}N_{e}}\rightarrow_{p}\frac{Pr\left\{ E_{i}=e\right\} }{Pr\left\{ E_{i}\in h^{\ell}\right\} }=Pr\left\{ E_{i}=e\mid E_{i}\in h^{\ell}\right\} $
by the Law of Large Numbers and Slutsky's theorem. Note that $\frac{N_{e}}{\sum_{e\in h^{\ell}}N_{e}}$
is also the regression coefficient estimator from the following cross-sectional
regression 
\begin{equation}
\mathbf{1}\left\{ E_{i}=e\right\} =\beta\mathbf{1}\left\{ E_{i}\in h^{\ell}\right\} +\eta_{i}\left(e\right)
\end{equation}
with population regression coefficient equal to $\beta=Pr\left\{ E_{i}=e\mid E_{i}\in h^{\ell}\right\} $.
Then by OLS asymptotics which holds as $E_{i}$ are iid by assumption
and $\eta_{i}\left(e\right)$ is bounded, we have 
\begin{equation}
\sqrt{N}\left(\frac{N_{e}}{\sum_{e\in h^{\ell}}N_{e}}-Pr\left\{ E_{i}=e\mid E_{i}\in h^{\ell}\right\} \right)\rightarrow_{d}N\left(0,\frac{E[\mathbf{1}\left\{ E_{i}\in h^{\ell}\right\} ^{2}\eta_{i}^{2}\left(e\right)]}{E[\mathbf{1}\left\{ E_{i}\in h^{\ell}\right\} ^{2}]^{2}}\right).
\end{equation}
Note that $\mathbf{1}\left\{ E_{i}\in h^{\ell}\right\} ^{2}=\mathbf{1}\left\{ E_{i}\in h^{\ell}\right\} $
so the asymptotic variance is equal to
\begin{align}
 & \frac{E[\eta_{i}^{2}\left(e\right)\mid E_{i}\in h^{\ell}]Pr\left\{ E_{i}\in h^{\ell}\right\} }{Pr\left\{ E_{i}\in h^{\ell}\right\} ^{2}}=\frac{E[\eta_{i}^{2}\left(e\right)\mid E_{i}\in h^{\ell}]}{Pr\left\{ E_{i}\in h^{\ell}\right\} }.
\end{align}

Similarly, for a pair of cohorts with $e\neq e'$, $\frac{N_{e}}{\sum_{e\in h^{\ell}}N_{e}}$
and $\frac{N_{e'}}{\sum_{e\in h^{\ell}}N_{e}}$ are asymptotically
correlated with covariance $E[\eta_{i}\left(e\right)\eta_{i}\left(e'\right)\mid E_{i}\in h^{\ell}]/Pr\left\{ E_{i}\in h^{\ell}\right\} $.
Thus, $vec\left(\widehat{\mathbf{f}}^{\ell}\right)$ has asymptotic
distribution 
\begin{equation}
\sqrt{N}\left(vec\left(\widehat{\mathbf{f}}^{\ell}\right)-vec\left(\mathbf{f}^{\ell}\right)\right)\rightarrow_{d}N\left(0,\Sigma_{f^{\ell}}\right).
\end{equation}
Here  $\Sigma_{f^{\ell}}$ is a matrix with diagonal entries equal
to $\frac{E[\eta_{i}^{2}\left(e\right)\mid E_{i}\in h^{\ell}]}{Pr\left\{ E_{i}\in h^{\ell}\right\} },$
and off-diagonal entries equal to $\frac{E[\eta_{i}\left(e\right)\eta_{i}\left(e'\right)\mid E_{i}\in h^{\ell}]}{Pr\left\{ E_{i}\in h^{\ell}\right\} }.$

By consistency of the weight estimators and Proposition \ref{prop:The-DID-estimator},
we prove the consistency of the IW estimator, the first part of Proposition
\ref{prop: CAN IW dynamic}. 

We next show the asymptotic normality of the $\widehat{\delta}_{e,\ell}$.
The standard OLS asymptotics applies because by assumption after double
demeaning, the data $(\ddot{\mathbf{B}}_{i,t},\ddot{\epsilon}_{i,t})$
is iid across $i$ and has nonzero finite fourth moments. The asymptotic
distribution of this estimator is thus
\begin{equation}
\sqrt{N}\left(\widehat{\boldsymbol{\delta}}-\boldsymbol{\delta}\right)\rightarrow_{d}N\left(0,\boldsymbol{V}_{\ddot{\mathbf{B}}}^{-1}Var\left(\sum_{t=0}^{T-1}\ddot{\mathbf{B}}_{i,t}\ddot{\epsilon}_{i,t}\right)\boldsymbol{V}_{\ddot{\mathbf{B}}}^{-1}\right)
\end{equation}
where $\begin{alignedat}[t]{1}\boldsymbol{V}_{\ddot{\mathbf{B}}} & =\sum_{t=0}^{T-1}E[\ddot{\mathbf{B}}_{i,t}\ddot{\mathbf{B}}_{i,t}^{\intercal}]\end{alignedat}
$.

Lastly, by the delta method, we have 
\begin{equation}
\sqrt{N}\left(vec\left(\widehat{\mathbf{f}}^{\ell}\right)^{\intercal}\widehat{\boldsymbol{\delta}}-vec\left(\mathbf{f}^{\ell}\right)^{\intercal}\boldsymbol{\delta}\right)\rightarrow_{d}N\left(0,\boldsymbol{\delta}^{\intercal}\Sigma_{f^{\ell}}\mathbf{\boldsymbol{\delta}}+\Sigma_{\ell}\right)
\end{equation}
where $\begin{alignedat}[t]{1}\Sigma_{\ell} & =vec(\mathbf{f}^{\ell})^{\intercal}\boldsymbol{V}_{\ddot{\mathbf{B}}}^{-1}Var\left(\sum_{t=0}^{T-1}\ddot{\mathbf{B}}_{i,t}\ddot{\epsilon}_{i,t}^{2}\ddot{\mathbf{B}}_{i,t}^{\intercal}\right)\boldsymbol{V}_{\ddot{\mathbf{B}}}^{-1}vec(\mathbf{f}^{\ell})\end{alignedat}
$. This follows because $vec\left(\widehat{\mathbf{f}}^{\ell}\right)$
and $\widehat{\boldsymbol{\delta}}$ are uncorrelated: the asymptotic
covariance between $\frac{N_{e}}{\sum_{e\in h^{\ell}}N_{e}}$ and
$\widehat{\boldsymbol{\delta}}$ is equal to 
\begin{equation}
\frac{\boldsymbol{V}_{\ddot{\mathbf{B}}}^{-1}Cov(\mathbf{1}\left\{ E_{i}\in h^{\ell}\right\} \eta_{i}\left(e\right),\ddot{\mathbf{B}}_{i,t}\ddot{\epsilon}_{i,t})}{E[\mathbf{1}\left\{ E_{i}\in h^{\ell}\right\} ^{2}]^{2}}=\frac{\boldsymbol{V}_{\ddot{\mathbf{B}}}^{-1}E[\ddot{\mathbf{B}}\eta_{i}\left(e\right)\ddot{\epsilon}_{i,t}\mid E_{i}\in h^{\ell}]}{E[\mathbf{1}\left\{ E_{i}\in h^{\ell}\right\} ^{2}]^{2}}.
\end{equation}
Since $\ddot{\mathbf{B}}$ and $\eta_{i}\left(e\right)$ are functions
of $E_{i}$, we have $E[\ddot{\mathbf{B}}\eta_{i}\left(e\right)\ddot{\epsilon}_{i,t}\mid E_{i}\in h^{\ell}]=E[\ddot{\mathbf{B}}\eta_{i}\left(e\right)E[\ddot{\epsilon}_{i,t}\mid E_{i},E_{i}\in h^{\ell}]]$.
Furthermore, specification (\ref{eq: saturated event studies model})
is saturated in $E_{i}$ and relative time so $E[\ddot{\epsilon}_{i,t}\mid E_{i}]=0$.
This proves the asymptotic asymptotic normality of the IW estimators,
the second part of Proposition \ref{prop: CAN IW dynamic}.
\end{proof}

\end{document}